\def\lss{{\rm lss}}
\def\sc{{\rm sc}}
\def\lcm{{\rm lcm}}
\def\Zee{{\mathbb Z}}
\def\fmod#1 #2{#1\ ({\rm mod}\ #2)}
\def \endpf{{\ \ $\Box$ \medbreak}}
\newcommand{\ar}{\rightarrow}
\begin{document}

\frontmatter

\title{Decidability and Shortest Strings in Formal Languages}

\author{Levent Alpoge\inst{1}, Thomas Ang\inst{2}, Luke Schaeffer\inst{2},
and Jeffrey Shallit\inst{2}}

\institute{Harvard College,
Cambridge, MA  02138, USA \\
\email{levent.alpoge@gmail.com}
\and
School of Computer Science,
University of Waterloo,
Waterloo, ON  N2L 3G1 Canada  \\
\email{angthomas@gmail.com, l3schaef@uwaterloo.ca, shallit@cs.uwaterloo.ca}
}

\maketitle 

\begin{abstract}
Given a formal language $L$ specified in various ways, we consider
the problem of determining if $L$ is nonempty.  If $L$ is indeed
nonempty, we find upper and lower bounds on the length of the
shortest string in $L$.
\end{abstract}

\section{Introduction}

Given a formal language $L$ specified in some finite way, a common problem
is to determine whether $L$ is nonempty.  And if $L$ is indeed nonempty,
then another common problem is to determine good upper and lower bounds
on the length of the shortest string in $L$, which we write
as $\lss(L)$.
Such bounds can be useful,
for example, in estimating the state complexity of $L$, since
$\lss(L) < \sc(L)$.

As an example, we start with a very simple
result often stated in introductory classes on formal language theory.

\begin{proposition}
Let $L$ be accepted by an NFA $M$ with $n$ states and $t$
transitions.
Then we can decide in time $O(n + t)$ whether
$L \not= \emptyset$.  If $L$ is nonempty, then
$\lss(L) < n$.  Further, this bound is tight.
\label{one}
\end{proposition}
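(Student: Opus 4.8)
The plan is to reduce the emptiness question to a reachability question in a directed graph. First I would form the underlying digraph $G$ of $M$: its vertices are the $n$ states of $M$, and it has a directed edge $p \to q$ whenever $M$ has a transition from $p$ to $q$ on some letter, so $G$ has at most $t$ edges. Tracing an accepting run of $M$ gives a walk in $G$ from a start state to an accepting state, and conversely every edge of such a walk carries a letter, so concatenating these letters produces a string in $L$; hence $L \not= \emptyset$ if and only if some accepting state of $M$ is reachable in $G$ from some start state. To test this in time $O(n+t)$ I would run a single breadth-first (or depth-first) search, seeding the queue with \emph{all} start states of $M$ simultaneously, and report nonemptiness exactly when some accepting state gets marked; the standard analysis of BFS/DFS on a digraph with $n$ vertices and at most $t$ edges gives the $O(n+t)$ bound.

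For the upper bound on $\lss(L)$, suppose $L \not= \emptyset$, so some accepting state $f$ is reachable from some start state $s$ in $G$. Choose a path from $s$ to $f$ with the fewest edges; being shortest, it is simple, so it visits at most $n$ vertices and therefore uses at most $n-1$ edges. Reading off, for each edge, a letter that witnesses the corresponding transition yields an accepting computation of $M$ on a string of length at most $n-1$, whence $\lss(L) \le n-1 < n$.

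For tightness I would exhibit, for each $n \ge 1$, an NFA (in fact a DFA) with $n$ states achieving $\lss(L) = n-1$. The natural choice is the chain automaton over the unary alphabet $\{a\}$ with states $q_0,\dots,q_{n-1}$, start state $q_0$, sole accepting state $q_{n-1}$, and transitions $q_i \to q_{i+1}$ on $a$ for $0 \le i \le n-2$; then $L = \{a^{n-1}\}$, so $\lss(L) = n-1$, matching the bound exactly.

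I do not expect a real obstacle here; the only points that need a little care are permitting $M$ to have several start states (handled by seeding the search with all of them at once) and, if $\varepsilon$-transitions are allowed, treating them as ordinary edges of $G$ that contribute $0$ to the length of the output string — this can only shorten the witness, so the bound $\lss(L) < n$ still holds, as does the $O(n+t)$ running time.
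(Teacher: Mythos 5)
Your proof is correct and is exactly the standard argument the paper has in mind (the paper omits the proof of this proposition entirely, calling it a simple introductory result, and elsewhere invokes the same BFS/DFS reachability idea): reachability in the transition digraph in $O(n+t)$ time, a shortest accepting path is simple and hence has at most $n-1$ edges, and the unary chain DFA shows tightness. No gaps.
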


%
%

We now turn to a more challenging example.  Here $L$ is specified as the
{\it complement} of a language accepted by an NFA.

\begin{theorem} 
Let $L$ be accepted by an NFA with $n$ states.  Then it is 
PSPACE-complete to determine whether $\overline{L} \not= \emptyset$.
If $\overline{L} \not= \emptyset$, then 
$\lss( \, \overline{L} \, ) < 2^n$.
Further, for some constant $c$, $0 < c \leq 1$,
there is an infinite family of examples with
$n$ states such that $\lss( \, \overline{L} \, ) \geq 2^{cn}$.
\end{theorem}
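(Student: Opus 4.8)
The plan is to establish the three assertions separately: membership in PSPACE, PSPACE-hardness, the $2^n$ upper bound, and the exponential lower bound; I expect the lower bound to be the main technical obstacle.

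\textbf{Membership in PSPACE.} Observe that $\overline{L}\neq\emptyset$ iff $L\neq\Sigma^*$, i.e.\ the given NFA $M$ is not universal. I would give a nondeterministic polynomial-space algorithm: guess a rejected string $w$ one symbol at a time while maintaining only the current subset $S\subseteq Q$ of states of $M$ reachable on the prefix read so far (the state of the subset construction), and accept as soon as $S$ contains no final state. Since there are only $2^n$ such subsets, if a rejected string exists then one of length $<2^n$ does, so the search need not run longer than that; hence the problem is in NPSPACE, and PSPACE by Savitch's theorem. (This also re-proves the upper bound: the subset construction applied to $M$ and then complemented yields a DFA for $\overline{L}$ with at most $2^n$ states, and since a DFA is an NFA, Proposition~\ref{one} gives $\lss(\overline{L})<2^n$ whenever $\overline{L}\neq\emptyset$.)

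\textbf{PSPACE-hardness.} Reduce from the acceptance problem for deterministic Turing machines that run in polynomial space, which is PSPACE-complete. Given such a machine $T$ and input $x$, I would build in polynomial time an NFA $A$ over an alphabet encoding configurations so that $L(A)$ is exactly the set of strings that are \emph{not} valid accepting computation histories $\#C_0\#C_1\#\cdots\#C_m\#$ of $T$ on $x$. The automaton $A$ nondeterministically guesses how its input fails to be such a history: bad syntax, incorrect $C_0$, non-accepting $C_m$, or --- the only delicate case --- a consecutive pair $C_i,C_{i+1}$ with $C_{i+1}$ not a legal one-step successor of $C_i$. For the last case $A$ guesses the index $i$ and a position $j$, runs a counter up to the (polynomial) configuration length, and checks that the length-three window of $C_i$ around $j$ is inconsistent with the corresponding window of $C_{i+1}$ under $T$'s transition rule; this keeps $A$ polynomial-size. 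Then $T$ accepts $x$ iff a valid accepting history exists iff $L(A)\neq\Sigma^*$ iff $\overline{L(A)}\neq\emptyset$, completing the reduction.

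\textbf{Lower bound.} For each $k$ I would exhibit an NFA $M_k$ with $O(k)$ states over $\{0,1\}$ whose complement consists of the single ``binary counter'' string $w_k=b_0b_1\cdots b_{2^k-1}$, where $b_j\in\{0,1\}^k$ is the $k$-bit representation of $j$, so $|w_k|=k\,2^k$. The machine $M_k$ accepts a string $w$ iff at least one ``defect'' occurs, each recognizable by an NFA with $O(k)$ states: $|w|$ is not a multiple of $k$; the first length-$k$ block is not $0^k$; the last block is not $1^k$; or two consecutive blocks $b,b'$ satisfy $b'\neq b+1$ under $k$-bit binary increment (with the all-ones block treated as having no valid successor). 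A short argument shows that no string other than $w_k$ avoids all four defects, so $L(M_k)=\{0,1\}^*\setminus\{w_k\}$. Writing $n=O(k)$ for the state count, $\lss(\overline{L})=k\,2^k\ge 2^k\ge 2^{cn}$ for a suitable constant $c$ with $0<c\le1$. The hard part is verifying that the increment-defect can really be detected by an $O(k)$-state NFA and that completeness holds exactly; a unary construction based on the $\lcm$ of small primes would only give $\lss(\overline{L})=e^{\Theta(\sqrt{n\log n})}$, so the two-letter encoding is essential to reach a genuinely exponential bound.
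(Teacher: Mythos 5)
The paper does not actually prove any part of this theorem: it cites \cite{Aho&Hopcroft&Ullman:1974} for PSPACE-completeness, remarks that the upper bound follows from the subset construction, and cites \cite{Ellul:2005} for the lower bound. Your proposal reconstructs the underlying standard arguments, and three of the four pieces are correct as written: the NPSPACE algorithm that walks the subset automaton (with a step counter bounded by $2^n$) plus Savitch's theorem gives membership; the Meyer--Stockmeyer-style reduction via invalid computation histories gives hardness (the window check needs a counter of polynomial length, which is fine); and determinizing-then-complementing gives $\lss(\overline{L}) < 2^n$ via Proposition~\ref{one}.

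The gap is exactly where you flag it, and it is a real one: with the block $b_j$ written most-significant-bit first, the ``increment defect'' NFA must, after guessing a position at offset $t$ inside a block, (i) determine the carry into position $t$, which depends on the bits \emph{after} offset $t$ up to the block boundary, and (ii) locate the corresponding bit exactly $k$ symbols later. Doing both simultaneously seems to force the product of an offset-tracker mod $k$ with a distance-$k$ counter, i.e.\ $\Theta(k^2)$ states; that only yields $\lss(\overline{L}) = 2^{\Theta(\sqrt{n})}$, which does not meet the claimed $2^{cn}$. The fix is a specific encoding choice: write each block least-significant-bit first. Then the carry into offset $t$ is the AND of the bits at offsets $0,\dots,t-1$ of the \emph{same} block, which the machine has already read; it can maintain (position mod $k$, running AND within the current block) with $O(k)$ states, compute the expected bit $e = b[t] \oplus \mathrm{carry}_t$ at the moment it guesses $t$, and then simply count $k$ symbols forward and check for a mismatch with $e$ --- no knowledge of $t$ is needed after the guess. (Alternatively, insert a separator symbol between blocks, which makes the boundary syntactically visible during the forward count and rescues the MSB-first version.) The overflow, alignment, first-block and last-block defects are routine $O(k)$-state checks, and uniqueness of the defect-free string $w_k$ follows by induction on blocks. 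With that one adjustment your construction is complete and yields the stated $2^{cn}$ bound.
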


\begin{proof}
For the PSPACE-completeness, see \cite{Aho&Hopcroft&Ullman:1974}.

The upper bound is easy and
follows from the subset construction.  The lower
bound is significantly harder; see \cite{Ellul:2005}. \endpf
\end{proof}

These two examples set the theme of the paper.  We examine several problems
about shortest strings in regular languages and prove bounds for
$\lss(L)$.  Some of the results have appeared in the master's thesis of the
second author \cite{Ang:2010}.

\section{The first problem}

Recall the following classical result about intersections of regular
languages.

\begin{proposition}
Let $L_1$ (resp., $L_2$) be accepted by an NFA
with $s_1$ states and $t_1$ transitions 
(resp., $s_2$ states and $t_2$ transitions)
Then $L_1 \ \cap \ L_2$ is accepted by
an NFA with $s_1 s_2$ states and
$t_1 t_2$ transitions.
\label{two}
\end{proposition}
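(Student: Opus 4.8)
The plan is to use the standard Cartesian product construction. Write $M_1 = (Q_1, \Sigma, \delta_1, q_1, F_1)$ and $M_2 = (Q_2, \Sigma, \delta_2, q_2, F_2)$ for the two NFAs, with $|Q_1| = s_1$ and $|Q_2| = s_2$ (and, as is usual for a statement of this form, without $\varepsilon$-transitions). First I would define the product automaton $M = (Q_1 \times Q_2,\ \Sigma,\ \delta,\ (q_1,q_2),\ F_1 \times F_2)$, whose transition function runs both machines in lockstep on the same input: $\delta((p,q),a) = \delta_1(p,a) \times \delta_2(q,a)$ for each $a \in \Sigma$. A routine induction on $|w|$ then shows that the set of states reachable in $M$ on input $w$ starting from $(p,q)$ is exactly $\hat\delta_1(p,w) \times \hat\delta_2(q,w)$, where $\hat\delta_i$ is the usual extension of $\delta_i$ to strings; consequently $w$ is accepted by $M$ iff $\hat\delta_1(q_1,w) \cap F_1 \neq \emptyset$ and $\hat\delta_2(q_2,w) \cap F_2 \neq \emptyset$, i.e. iff $w \in L_1 \cap L_2$.

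Next I would do the counting. The state set $Q_1 \times Q_2$ plainly has $s_1 s_2$ elements, which gives the state bound. For the transitions, fix a letter $a \in \Sigma$ and let $n_i(a)$ be the number of $a$-labelled transitions of $M_i$, so $\sum_{a \in \Sigma} n_i(a) = t_i$. An $a$-transition of $M$ from $(p,q)$ to $(p',q')$ exists precisely when $p' \in \delta_1(p,a)$ and $q' \in \delta_2(q,a)$, so $M$ has exactly $n_1(a)\,n_2(a)$ transitions labelled $a$. Summing over $a$ and comparing with the expansion $t_1 t_2 = \bigl(\sum_{a} n_1(a)\bigr)\bigl(\sum_{b} n_2(b)\bigr) = \sum_{a,b} n_1(a)\,n_2(b)$, which dominates $\sum_{a} n_1(a)\,n_2(a)$ term by term since all summands are nonnegative, we conclude that $M$ has at most $t_1 t_2$ transitions, as claimed (if one wants exactly $t_1 t_2$, one can pad with unreachable transitions, or simply read the statement as an upper bound).

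I do not expect a real obstacle here: the construction is classical, and the only point needing a moment's care is the transition count, where the bound $t_1 t_2$ arises by discarding the nonnegative cross terms $n_1(a)\,n_2(b)$ with $a \neq b$ rather than from an exact equality. The correctness induction is entirely mechanical, and the reachability analysis needed to sharpen "at most $t_1 t_2$" further is unnecessary for the proposition as stated.
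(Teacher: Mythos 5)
Your proposal is correct and is exactly the approach the paper takes: its entire proof is ``use the usual direct product construction,'' which you have simply spelled out in full, including the correctness induction and the transition count $\sum_a n_1(a)n_2(a) \leq t_1 t_2$. No discrepancy to report.
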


\begin{proof}
Use the usual direct product construction. \endpf
\end{proof}

This suggests the following natural problems.  Given NFA's $M_1$ and
$M_2$ as above, decide if $L(M_1) \ \cap \ L(M_2) \not= \emptyset$.
This can clearly be done in $O(s_1 s_2 + t_1 t_2)$ time, by using the
direct product construction followed by breadth-first or depth-first
search.

Now assume
$L(M_1) \ \cap \ L(M_2) \not= \emptyset$. What is a good bound
on $\lss(L(M_1) \ \cap \ L(M_2))$?  Combining Propositions~\ref{one} and
\ref{two}, we immediately
get the upper bound $\lss(L(M_1) \ \cap \ L(M_2)) < s_1 s_2$.

However, is this bound tight?  For $\gcd(s_1, s_2) = 1$
an obvious construction shows it is, even in the
unary case:  choose $L_1 = a^{s_1-1} (a^{s_1})^*$ and
$L_2 = a^{s_2-1} (a^{s_2})^*$.  
However, this idea no longer works for $\gcd(s_1,s_2) > 1$.  Nevertheless,
the bound $s_1 s_2 -1$ is tight for binary and larger alphabets, as the
following result shows.

\begin{theorem}
\label{thm:intersect}
For all integers $m, n \geq 1$ there exist DFAs $M_1, M_2$ with $m$ and
$n$ states, respectively, and with $|\Sigma| = 2$ such that $L(M_1)
\cap L(M_2) \not= \emptyset$, and ${\rm lss}(L(M_1) \cap L(M_2) ) =
mn-1$.
\end{theorem}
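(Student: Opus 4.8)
The plan is to reduce the claim to a statement about the product automaton and then to build $M_1$ and $M_2$ explicitly. By Propositions~\ref{one} and~\ref{two}, the product $M_1\times M_2$ of DFAs with $m$ and $n$ states has exactly $mn$ states, so we already know $\lss(L(M_1)\cap L(M_2))<mn$ for any such pair; to attain $mn-1$ I want $M_1\times M_2$ to be ``maximally stretched.'' Concretely, I aim to produce DFAs $M_1,M_2$ over $\Sigma=\{a,b\}$, each with a single accepting state, so that in $M_1\times M_2$ all $mn$ states are reachable from the start state and lie at pairwise distinct breadth-first distances $0,1,\dots,mn-1$, with the accepting product state at distance $mn-1$. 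Equivalently, I want an ordering $P_0,P_1,\dots,P_{mn-1}$ of the product states, $P_0$ being the pair of start states and $P_{mn-1}$ the pair of accept states, in which every transition — on either letter, from any state — goes from some $P_k$ to some $P_j$ with $j\le k+1$. Once this is done, no word of length $<mn-1$ reaches $P_{mn-1}$, the word tracing $P_0\to P_1\to\dots\to P_{mn-1}$ does, so $\lss(L(M_1)\cap L(M_2))=mn-1$ and the intersection is nonempty.

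Next I would clear the trivial and coprime instances, which also supply the intuition. If $m=1$ or $n=1$, take $L(M_1)=\Sigma^*$ and let $M_2$ count $b$'s modulo $n$, accepting the residue $n-1$; then the shortest string is $b^{n-1}$. If $\gcd(m,n)=1$, let $M_1$ and $M_2$ count the input \emph{length} modulo $m$ and modulo $n$ (both letters acting as $+1$), accepting the residues $m-1$ and $n-1$; the intersection is $\{w:|w|\equiv-1\pmod{\lcm(m,n)}\}=\{w:|w|\equiv-1\pmod{mn}\}$, whose shortest member is $a^{mn-1}$. For $\gcd(m,n)=d>1$ this length trick only yields a shortest string of length $\lcm(m,n)-1=mn/d-1$, so the second letter must genuinely participate; this is the whole point of allowing $|\Sigma|=2$.

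For general $m,n$ I would realize the ordering $P_0,\dots,P_{mn-1}$ as a ``line'': except at $P_0$, each $P_k$ has one outgoing edge to $P_{k+1}$ and one to $P_{k-1}$ (both edges of $P_0$ going to $P_1$), which automatically makes all breadth-first layers singletons. The idea is to let one component act like a small cyclic counter whose coordinate in $P_k$ is forced (e.g.\ $k\bmod n$ in $M_2$), and to choose the ordering of the $m$ states of $M_1$ within each residue class together with the two transition functions $\delta_1^a,\delta_1^b$ so that one letter walks $M_1$ forward along the line and the other walks it back. I would first do this for $n=2$, where it comes out cleanly: $M_2$ counts length modulo $2$; $M_1$ has states $\{0,\dots,m-1\}$ with accept state $0$, $\delta_1(0,a)=\delta_1(0,b)=m-1$, $\delta_1(i,b)=(i+1)\bmod m$ and $\delta_1(i,a)=\max(i-1,1)$ for $i\ge 1$; one checks the product is the line $(0,0)\to(m-1,1)\to(m-2,0)\to\cdots$, traced by $a^m b^{m-1}$ of length $2m-1$. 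Then I would extend this ``folded'' pattern to arbitrary $n$.

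The step I expect to be the real obstacle is exactly the verification that no transition is a \emph{forward} shortcut. Since $M_1$ and $M_2$ cannot read each other's state, a carry-type letter used while $M_1$'s counter is not yet maximal will, under the naive mixed-radix ordering of $\mathbb{Z}_m\times\mathbb{Z}_n$, leap ahead in the path order by a whole low-digit block (e.g.\ $P_0\xrightarrow{b}P_n$ or further), which would crash $\lss$ down to roughly $m+n$. Ruling out every such leap is what forces the ordering of the $M_1$-states and the functions $\delta_1^a,\delta_1^b$ to be designed jointly and delicately, and — since the coprime case already accounts for the ``size-one alphabet'' phenomenon — this is where all the content of the theorem lies; once the line structure is confirmed, checking that the exhibited word of length $mn-1$ is accepted and that $F_1\times F_2=\{P_{mn-1}\}$ is routine bookkeeping.
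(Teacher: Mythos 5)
There is a genuine gap: you never actually construct $M_1$ and $M_2$ for general $m,n$. Your reduction to a ``line'' ordering of the product states is sound (indeed it is forced: a shortest accepted word of length $mn-1$ traces a simple path through all $mn$ product states, which must then sit at distances $0,1,\dots,mn-1$), and your treatment of the cases $n\le 2$ and $\gcd(m,n)=1$ checks out. But for the remaining cases --- which are exactly the ones the theorem is about --- you stop at ``then I would extend this folded pattern to arbitrary $n$,'' and your final paragraph is an explanation of why that extension is delicate rather than a proof that it can be done. You have correctly located the crux (ruling out forward shortcuts caused by carry-type transitions in a mixed-radix ordering) but not resolved it, so as written this is a proof plan, not a proof.

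For comparison, the paper avoids exhibiting any explicit line ordering and instead argues by counting. It takes $M_1$ to accept $\{x : |x|_1 \equiv 0 \pmod m\}$, and builds $M_2$ (with accepting state $q_{n-1}$) so that every accepted word has $|x|_1 = i(m-1)+jm$ for some $i>0$, $j\ge 0$, with $|x|_0 \ge i(n-m+1)-1$, and so that these bounds are attained. Since $\gcd(m-1,m)=1$, the congruence $i(m-1)+jm\equiv 0 \pmod m$ forces $i\equiv 0\pmod m$, hence $i=m$, $j=0$ for the shortest word, giving length $m(m-1)+m(n-m+1)-1=mn-1$. That congruence argument is the ingredient your proposal is missing: it is what certifies, all at once, that no short word sneaks into the intersection, without having to verify the absence of forward edges transition by transition. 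If you want to complete your route instead, you must write down $\delta_1^a,\delta_1^b$ for arbitrary $m,n$ and prove the no-shortcut property; until then the theorem is unproved for, e.g., $m=n=4$.
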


\begin{proof}
The proof is constructive. Without loss of generality, assume $m \leq
n$, and set $\Sigma = \{0,1\}$. Let $M_1$ be the DFA given
by $(Q_1, \Sigma,
\delta_1, p_0, F_1)$, where $Q_1 = \{p_0, p_1, p_2,\ldots, p_{m-1}\}$,
$F_1 = p_0$, and for each $a$, $0 \leq a \leq m-1$, and 
$c \in \lbrace 0, 1 \rbrace $ we set
\begin{equation}
\label{eq:delta1}
\delta_1(p_a, c) = p_{(a+c) \bmod m}.
\nonumber
\end{equation}
Then $$L(M_1) = \{ x \in \Sigma^* : |x|_1 \equiv 0\!\! \pmod {m} \}.$$ 

Let $M_2$ be the DFA $(Q_2, \Sigma, \delta_2, q_0, F_2)$, shown in Figure \ref{fig:m2}, where $Q_2 = \{q_0, q_1,\ldots, q_{n-1}\}$, $F_2 = q_{n-1}$, and for each $a$, $0 \leq a \leq n-1$,
\begin{equation}
\label{eq:delta2}
\delta_2(q_a, c) =
\begin{cases}
q_{a+c}, & \text{if } 0 \leq a < m-1; \\
q_{(a+1) \bmod n}, & \text{if }c = 0 \text{ and } m-1 \leq a \leq n-1; \\
q_0, & \text{if }c = 1 \text{ and } m-1 \leq a \leq n-1.
\end{cases} \nonumber
\end{equation}

\begin{figure}[H]
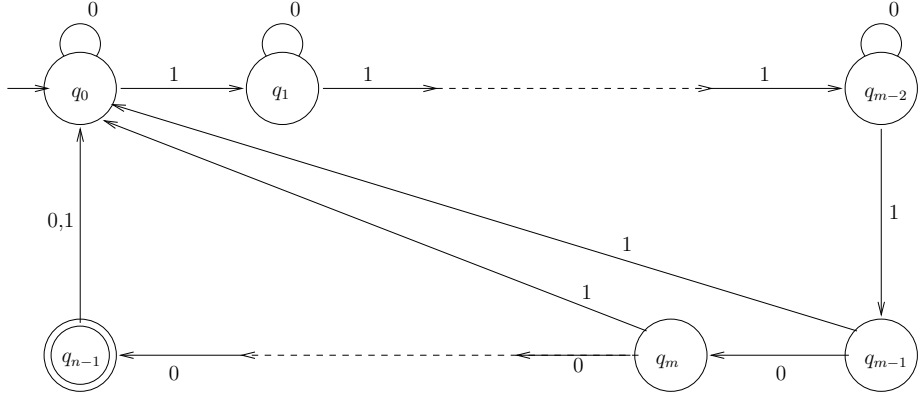

\begin{center}
\resizebox{\columnwidth}{!}{\input intersectm3.pstex_t}
\end{center}

\caption{
	The DFA $M_2$. 
}
\label{fig:m2}
\end{figure}

Focusing solely on the $1$'s that appear in some accepting computation in
$M_2$, we see that we can return to $q_0$ 
\begin{itemize}
\item[(a)] via a simple path with $m$ $1$'s, or
\item[(b)] (if we go through $q_{n-1}$), via a simple path with 
$(m-1)$ $1$'s and ending in the transition $\delta(q_{n-1}, 0) = q_0$.
\end{itemize}
After some number of cycles through $q_0$, we eventually
arrive at $q_{n-1}$.  Letting $i$ denote the number of times a path of
type (b) is chosen (including the last path that arrives at $q_{n-1}$)
and $j$ denote the number of times a path of type (a) is chosen, we see
that the number of $1$'s in any accepted word must be of the form
$i(m-1) + jm$, with $i > 0$, $j \geq 0$.  The number of $0$'s along
such a path is then at least $i(n-m+1) - 1$, with the $-1$ in this expression
arising from the fact that the last part of the path terminates at $q_{n-1}$
without taking an additional $0$ transition back to $q_0$.

Thus
\begin{align*}
L(M_2) \subseteq \{ x \in \Sigma^* : \exists i,j \in {\mathbb N}, \text{ such that } i > 0, j \geq 0, \text{ and } \\
|x|_1  = i(m-1) + jm, \ |x|_0 \geq i(n-m+1)-1 \}.
\end{align*}
Furthermore,
for every $i,j \in {\mathbb N}, \text{ such that } i > 0, j \geq 0$,
there exists an $x \in L(M_2)$ such that $|x|_1  = i(m-1) + jm$, and
$|x|_0 = i(n-m+1)-1$.   This is obtained, for example, by cycling 
$j$ times from $q_0$ to $q_{m-1}$
and then back to $q_0$ via a transition on $1$, then $i-1$ times
from $q_0$ to $q_{n-1}$ and then back to $q_0$ via a transition on $0$,
and finally one more time from $q_0$ to $q_{n-1}$.  

It follows then that 
\begin{align*}
L(M_1) \cap L(M_2) &\subseteq \{ x \in \Sigma^*: \exists i,j \in {\mathbb N}, \text{ such that } i > 0, j \geq 0, \text{ and } \\
& |x|_1  = i(m-1) + jm, \  |x|_0 \geq i(n-m+1)-1\\
& \text{and } i(m-1) + jm \equiv 0\!\! \pmod {m} \}.
\end{align*}
Further, for every such $i$ and $j$, there exists a corresponding element in $L(M_1 \cap M_2)$. Since $m-1$ and $m$ are relatively prime, the shortest such word corresponds to $i = m$, $j = 0$,  and satisfies $|x|_0 = m(n-m+1)-1$. In particular, a shortest accepted word is $(1^{m-1}0^{n-m+1})^{m-1}1^{m-1}0^{n-m}$, which is of length $mn-1$. \endpf
\end{proof}

We can also obtain a bound for the unary case.
Let 
$$F(m,n) = \max_{{1 \leq i \leq m}\atop{1\leq j \leq n}}
\bigl( \, \max(m-i, n-j)+ \lcm(i,j) \, \bigr),$$
as defined in \cite{Pighizzini&Shallit:2002}.

\begin{theorem}
Given unary DFA's $M_1$ (resp., $M_2$) with $m$ (resp., $n$) states, 
accepting $L_1$ (resp., $L_2$), we have $\lss(L_1 \ \cap \ L_2) \leq
F(m,n)-1$.  Furthermore, for all $m, n \geq 1$ there exist
unary DFA's of $m$ and $n$ states achieving this bound.
\end{theorem}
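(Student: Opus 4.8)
The plan is to exploit the rigid ``$\rho$-shape'' of unary DFAs. The reachable part of a unary DFA with $m$ states is a tail of some length $\mu_1 \ge 0$ feeding into a cycle of length $\lambda_1 \ge 1$, with $\mu_1 + \lambda_1 \le m$; hence $L_1$ is ultimately periodic, in the sense that for $k \ge \mu_1$ the membership of $a^k$ in $L_1$ depends only on $k \bmod \lambda_1$. The same applies to $L_2$ with parameters $\mu_2, \lambda_2$, $\mu_2 + \lambda_2 \le n$. The upper bound will come from reducing a putative shortest string down to within one ``period window'', and the matching example will be a direct construction realizing the residue class that sits highest in that window.

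For the upper bound, I would set $i = \lambda_1$ and $j = \lambda_2$, so $1 \le i \le m$, $1 \le j \le n$, $\mu_1 \le m-i$, $\mu_2 \le n-j$; write $\mu^* = \max(\mu_1,\mu_2)$ and $\ell = \lcm(i,j)$. The key observation is that for $k,k' \ge \mu^*$ with $k \equiv k' \pmod{\ell}$ one has $a^k \in L_1\cap L_2 \iff a^{k'} \in L_1\cap L_2$, since $\ell$ is a common multiple of $i$ and $j$. Now assume $L_1\cap L_2 \neq \emptyset$ and let $k_0 = \lss(L_1\cap L_2)$. If $k_0 < \mu^*$ then $k_0 \le \mu^*-1 \le \max(m-i,n-j)-1$; if $k_0 \ge \mu^*$, reducing $k_0$ modulo $\ell$ into the window $[\mu^*,\mu^*+\ell-1]$ yields another accepted length, which by minimality must still be $\ge k_0$, so $k_0 \le \mu^*+\ell-1 \le \max(m-i,n-j)+\lcm(i,j)-1$. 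In either case $k_0 \le \max(m-i,n-j)+\lcm(i,j)-1 \le F(m,n)-1$, because $(i,j)$ is one of the pairs over which the maximum defining $F(m,n)$ ranges. (Alternatively, one can note that $L_1\cap L_2$ is accepted by a unary DFA with at most $F(m,n)$ states and apply Proposition~\ref{one}.)

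For the matching lower bound, fix a pair $(i^*,j^*)$, $1\le i^*\le m$, $1\le j^*\le n$, attaining $F(m,n) = \max(m-i^*,n-j^*)+\lcm(i^*,j^*)$; put $\mu = \max(m-i^*,n-j^*)$, $\ell = \lcm(i^*,j^*)$, $N = \mu+\ell-1 = F(m,n)-1$, and by symmetry assume $m-i^* \ge n-j^*$, so $\mu = m-i^*$. I would take $M_1$ to be the $m$-state unary DFA with states $p_0,\dots,p_{m-1}$ whose transitions form a tail $p_0,\dots,p_{m-i^*-1}$ leading into a cycle $p_{m-i^*},\dots,p_{m-1}$ (tail length $m-i^*$, cycle length $i^*$), with the single accepting state being the state $p_r$ of the cycle with $r \equiv N \pmod{i^*}$; then $L_1 = \{a^k : k \ge m-i^*,\ k \equiv N \pmod{i^*}\}$ and no string shorter than $m-i^*$ is accepted. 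Define $M_2$ analogously with $n$ states, tail length $n-j^*$, cycle length $j^*$, and accepting state chosen so that $L_2 = \{a^k : k \ge n-j^*,\ k \equiv N \pmod{j^*}\}$. By the Chinese Remainder Theorem $L_1\cap L_2 = \{a^k : k \ge \mu,\ k \equiv N \pmod{\ell}\}$; since $N - \ell = \mu-1 < \mu$, the least such $k$ is $N$ itself, whence $L_1\cap L_2 \neq \emptyset$ and $\lss(L_1\cap L_2) = N = F(m,n)-1$.

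The routine parts are the bookkeeping about which state a unary DFA occupies after reading $a^k$ and the residue arithmetic. The one place that needs care is the lower-bound construction: I must verify that (i) designating a single accepting state in each cycle really produces the stated periodic languages with nothing accepted in the two ``tail'' portions, and (ii) the CRT-combined residue class first meets the window $[\mu,\mu+\ell)$ precisely at its top element $N$, not earlier — and I should double-check the boundary cases ($m=1$ or $n=1$, or $\mu=0$, i.e.\ $i^*=m$ and $j^*=n$) and the symmetric alternative $n-j^* > m-i^*$.
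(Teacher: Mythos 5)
Your proposal is correct. The paper itself gives no argument here --- it simply defers to the cited reference of Pighizzini and Shallit, where $F(m,n)$ is introduced as the (tight) state complexity of intersection of unary DFAs --- so what you have written is a self-contained proof of what the paper leaves as a citation. Your two halves are exactly the right ones: the upper bound via the tail-plus-cycle ($\rho$-shape) structure, taking $i,j$ to be the two cycle lengths so that $\mu^* \leq \max(m-i,n-j)$ and reducing any accepted length modulo $\lcm(i,j)$ into the window $[\mu^*, \mu^*+\lcm(i,j)-1]$; and the lower bound via a pair of tail-plus-cycle DFAs whose single accepting states pin down the residue class of $N = F(m,n)-1$, which by the choice $N \equiv N$ modulo both cycle lengths and $N - \lcm(i^*,j^*) < \mu$ is first realized at $N$ itself. (Your parenthetical alternative --- that $L_1 \cap L_2$ is accepted by a unary DFA with at most $F(m,n)$ states, so Proposition~\ref{one} applies --- is in fact the one-line route the paper intends, since the citation establishes precisely that state-complexity bound.) Two cosmetic points to tighten in a write-up: the accepting cycle state should be described as ``the state reached after reading $a^N$'' rather than ``$p_r$ with $r \equiv N \pmod{i^*}$,'' since the cycle states are indexed from $m-i^*$ rather than from $0$; and the ``Chinese Remainder Theorem'' step is really just the statement that $i^* \mid (k-N)$ and $j^* \mid (k-N)$ iff $\lcm(i^*,j^*) \mid (k-N)$, which needs no coprimality hypothesis. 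Neither affects correctness.
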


\begin{proof}
Follows from \cite{Pighizzini&Shallit:2002}.
\end{proof}

\section{The second problem}

Recall the Post correspondence problem:  we are given two finite
nonempty languages $A = \lbrace x_1, x_2, \ldots, x_n \rbrace$ and $B = 
\lbrace y_1, y_2, \ldots, y_n \rbrace$, and we want to determine if there
exist $r \geq 1$ and
a finite sequence of indices $i_1, i_2, \ldots , i_r$
such that $x_{i_1} \cdots x_{i_r} = y_{i_1} \ldots y_{i_r}$.  As is well-known,
this problem is undecidable.

Levent Alpoge \cite{Alpoge} asked about the variant where
we throw away the ``correspondence'':
determine if there exist $r, s \geq 1$ and two finite sequences of
indices $i_1, \ldots, i_r$ and $j_1, \ldots, j_s$ such that
$x_{i_1} \cdots x_{i_r} = y_{j_1} \cdots y_{j_s}$.   In other words,
we want to decide if $A^+ \ \cap \ B^+ \not= \emptyset$.

This variant is, of course, decidable.   In fact, even a more general
version is decidable, where the languages need not be finite.

\begin{proposition}
Suppose $A$ is a language accepted by an NFA $M_1$
with $s_1$ states and $t_1$ transitions, and $B$ is accepted by
an NFA $M_2$ with $s_2$ states and $t_2$ transitions.  
Then we can decide in $O(s_1 s_2 + t_1 t_2)$ time whether
$A^+ \ \cap \ B^+ \not= \emptyset$.
\end{proposition}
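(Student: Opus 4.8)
The plan is to reduce the question $A^+ \cap B^+ \neq \emptyset$ to a reachability question in a product automaton, exactly in the spirit of Propositions~\ref{one} and~\ref{two}. First I would build NFAs $M_1^+$ and $M_2^+$ accepting $A^+$ and $B^+$ respectively: starting from $M_i$, add $\varepsilon$-transitions (or equivalently redirect transitions) from every accepting state back to the start state, so that a word is accepted iff it can be parsed as a nonempty concatenation of words of $L(M_i)$. This step is standard and costs only $O(s_i + t_i)$ new states and transitions, so $M_i^+$ still has $O(s_i)$ states and $O(t_i)$ transitions. One small technicality is that we must insist on at least one factor (the ``$+$'' rather than ``$*$''), which is handled by \emph{not} making the start state accepting in $M_i^+$ unless it was already reachable from an accepting state; equivalently, one introduces a fresh copy of the start state as the new initial state with no incoming edges.

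Next I would form the direct product $M = M_1^+ \times M_2^+$ as in Proposition~\ref{two}. If we first remove the $\varepsilon$-transitions, each $M_i^+$ has at most $s_i + 1$ states and $O(t_i)$ transitions, so $M$ has $O(s_1 s_2)$ states and $O(t_1 t_2)$ transitions, and $L(M) = A^+ \cap B^+$. Then $A^+ \cap B^+ \neq \emptyset$ exactly when some accepting state of $M$ is reachable from its start state, which by Proposition~\ref{one} can be tested by breadth-first or depth-first search in time linear in the size of $M$, i.e.\ $O(s_1 s_2 + t_1 t_2)$.

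The one point that needs a little care — and the step I expect to be the main obstacle for keeping the time bound — is the $\varepsilon$-transition elimination. A naive conversion of an NFA-with-$\varepsilon$ to an $\varepsilon$-free NFA can blow up the transition count, since computing $\varepsilon$-closures and composing them with ordinary transitions can produce up to $\Theta(s_i t_i)$ edges, which would spoil the $O(s_1 s_2 + t_1 t_2)$ target. To avoid this, I would build the product \emph{directly} on the $\varepsilon$-augmented machines and do the reachability search on that graph, treating each $\varepsilon$-transition of $M_i^+$ as a genuine edge of the product graph: the product graph then has $O(s_1 s_2)$ vertices and $O(t_1 t_2 + s_1 t_2 + s_2 t_1) = O(t_1 t_2 + s_1 s_2)$ edges (using $t_i \ge$ the number of $\varepsilon$-edges, which is at most $s_i$), and a single graph search over it still runs in $O(s_1 s_2 + t_1 t_2)$ time. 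This gives the claimed bound, and in fact the construction shows the slightly stronger fact that $A^+ \cap B^+$ is accepted by an NFA with $O(s_1 s_2)$ states and $O(t_1 t_2)$ transitions, which we will use in the sequel when bounding $\lss(A^+ \cap B^+)$.
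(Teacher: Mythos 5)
Your proposal is correct and follows essentially the same route as the paper: augment each NFA with $\epsilon$-transitions from final states back to the start to accept $A^+$ and $B^+$, form the direct product while letting one machine take an explicit $\epsilon$-move as the other stays put (the same ``subtle point'' the paper flags, yielding $O(t_1t_2 + s_1s_2)$ transitions), and then run a breadth-first or depth-first reachability search. The only minor quibble is your intermediate edge count $O(t_1t_2 + s_1t_2 + s_2t_1)$, which should really be (number of $\epsilon$-edges in one machine) $\times$ (states of the other), i.e.\ at most $s_1s_2$ each way — but your final bound is the right one.
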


\begin{proof}
Given NFA $M_1 = (Q_1, \Sigma, \delta_1, q_1, F_1)$
accepting $A$,
we can create an NFA-$\epsilon$
$M'_1 = (Q_1, \Sigma, \delta'_1, q_1, F'_1)$ accepting $A^+$
by adding an $\epsilon$-transition from every final state of $M_1$
back to $q_0$.
We can apply a similar construction to 
create $M'_2 = (Q_2, \Sigma, \delta'_2,
q_2, F'_2)$ accepting
$B^+$.  Then we can create an NFA-$\epsilon$ $M$
accepting $A^+ \ \cap \ B^+$ using
the usual direct product construction.  Since this construction is
crucial to what follows, and since there is one subtle point,
we describe it in some detail.

Given $M'_1 = (Q_1, \Sigma, \delta'_1, q_1, F'_1)$ and
$M'_2 = (Q_2, \Sigma, \delta'_2, q_2, F'_2)$ as above, 
$M = (Q, \Sigma, \delta, q_0, F)$, where
$Q = Q_1 \times Q_2$, $q_0 = [q_1, q_2]$, and $F = F_1 \times F_2$.
The transition function $\delta$ is defined as follows:  

For $p \in Q_1$, $q \in Q_2$, and $a \in \Sigma \cup \lbrace \epsilon \rbrace$
we have $[p', q'] \in \delta([p,q], a)$ if $p' \in \delta'_1(p,a)$ and
$q' \in \delta'_2(q,a)$.   These transitions correspond to
the usual direct product edges of the
transition diagram.

However, we also need edges in which one machine performs an explicit
$\epsilon$-transition, and the other machine performs an implicit
$\epsilon$-transition by simply staying in its own state.
This corresponds to including the transitions
$[p', q'] \in \delta([p,q],\epsilon)$ 
if $p' \in \delta'_1(p, \epsilon)$ and $q = q'$
or if $p' = p$ and $q' \in \delta'_2(q, \epsilon)$.

This construction results in an NFA-$\epsilon$ accepting
$A^+ \ \cap \ B^+$ and having at most $t_1 t_2 + 2 s_1 s_2$
transitions.

Now we can use the usual breadth-first or depth-first search to solve
the emptiness problem.
\endpf
\label{ab}
\end{proof}

\begin{corollary}
Given NFA's $M_1$ accepting $L_1$ (resp., $M_2$ accepting $L_2$)
of $m$ (resp., $n$) states, the
shortest string in $L_1^+ \ \cap \ L_2^+$ is of length at most
$mn-1$.

Suppose $m \geq n \geq 1$. Then there exists 
$M_1$ accepting $L_1$ (resp., $M_2$ accepting $L_2$) of $m$
(resp., $n$) states such that the shortest string in $L_1^+ \ \cap \ L_2^+$
is of length $\geq (m-1)n$.  
\end{corollary}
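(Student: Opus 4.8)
The plan is to establish the two bounds separately.

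For the upper bound I would invoke the NFA-$\epsilon$ built in the proof of Proposition~\ref{ab}: from NFAs $M_1$ (with $m$ states) for $L_1$ and $M_2$ (with $n$ states) for $L_2$, that construction yields an NFA-$\epsilon$ on state set $Q_1 \times Q_2$ accepting $L_1^+ \cap L_2^+$, hence one with exactly $mn$ states. Standard $\epsilon$-elimination (recompute each state's outgoing transitions through $\epsilon$-closures, and mark a state final when its $\epsilon$-closure meets the final set) does not enlarge the state set, so $L_1^+ \cap L_2^+$ is accepted by an ordinary NFA with $mn$ states. Applying Proposition~\ref{one} to this NFA then gives $\lss(L_1^+ \cap L_2^+) < mn$, i.e.\ at most $mn-1$, whenever $L_1^+ \cap L_2^+ \neq \emptyset$.

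For the lower bound (the case $m=1$ being vacuous, so assume $m \geq 2$), take $\Sigma = \{0,1\}$. The idea is to let $L_1$ be a single string of length $m-1$ containing exactly one $1$, so that any divisibility condition imposed through $L_2$ is forced to act on the number of \emph{copies} of that string. Concretely, let $M_1$ be the $m$-state NFA on states $q_0,\dots,q_{m-1}$ whose only transitions are $\delta_1(q_0,1)=q_1$ and $\delta_1(q_i,0)=q_{i+1}$ for $1 \leq i \leq m-2$, with start state $q_0$ and final state $q_{m-1}$, so $L_1 = \{1 0^{m-2}\}$; and let $M_2$ be the $n$-state DFA on states $s_0,\dots,s_{n-1}$ with $\delta_2(s_i,0)=s_i$, $\delta_2(s_i,1)=s_{(i+1)\bmod n}$, start state $s_0$ and final state $s_0$, so that $L_2 = \{x \in \Sigma^* : |x|_1 \equiv 0 \pmod n\}$ (exactly the machine $M_1$ from the proof of Theorem~\ref{thm:intersect}, with $n$ in place of $m$). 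Then $L_1^+ = \{(1 0^{m-2})^r : r \geq 1\}$, the $r$-th power having exactly $r$ ones and length $r(m-1)$; and $L_2^+ = L_2$, since $\epsilon \in L_2$ and $L_2$ is closed under concatenation. Hence $(1 0^{m-2})^r \in L_1^+ \cap L_2^+$ iff $n \mid r$, so
\[
L_1^+ \cap L_2^+ = \{(1 0^{m-2})^{kn} : k \geq 1\},
\]
a nonempty language whose shortest string is $(1 0^{m-2})^n$, of length $n(m-1) = (m-1)n$. (The hypothesis $m \geq n$ is not used by this construction; under it, $(m-1)n$ is simply the larger of the two symmetric bounds $(m-1)n$ and $(n-1)m$.)

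I do not expect a genuine obstacle here. The two points to be careful about are: in the upper bound, that $\epsilon$-elimination keeps the number of states at $mn$, so that Proposition~\ref{one}'s bound of ``less than the number of states'' still reads $mn-1$; and in the lower bound, the choice of a block with a \emph{single} $1$ --- replacing $1 0^{m-2}$ by, say, $1^{m-1}$ would only force $n/\gcd(n,m-1)$ copies and hence yield the weaker bound $\lcm(m-1,n)$.
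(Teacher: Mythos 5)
Your proof is correct. The upper bound is handled exactly as in the paper, which simply cites the construction of Proposition~\ref{ab}; your explicit remark that $\epsilon$-elimination keeps the state count at $mn$ before invoking Proposition~\ref{one} is a sensible filling-in of the detail the paper leaves implicit. The lower bound, however, takes a genuinely different route. The paper recycles the two DFAs from the proof of Theorem~\ref{thm:intersect}: the modular counter (for which $L^+ = L$ automatically) paired with the more elaborate machine, and then argues that the added $\epsilon$-transition from the final state back to the start saves exactly one $0$ per cycle, degrading the length-$(mn-1)$ witness to one of length $(m-1)n$. You instead build a fresh and much simpler pair: $L_1 = \{10^{m-2}\}$, whose plus-closure is the ``ruler'' language $\{(10^{m-2})^r : r \geq 1\}$, against a mod-$n$ counter of $1$'s, which pins the intersection down to $\{(10^{m-2})^{kn} : k \geq 1\}$ by an immediate counting argument; your side remark that a block with a single $1$ is essential (a block $1^{m-1}$ would only force $\lcm(m-1,n)$) is apt. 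What each approach buys: the paper's version shows that one family of witnesses is simultaneously near-extremal for the plain intersection bound and for this corollary, and it uses complete DFAs on both sides; your $M_1$ is a partial (hence nondeterministic) machine --- completing it would cost a dead state and an $(m+1)$-st state --- so your argument establishes exactly the NFA statement claimed, no more, but it is self-contained and avoids re-examining the cycle structure of the Theorem~\ref{thm:intersect} automaton. Both arguments are sound; yours also does not actually use the hypothesis $m \geq n$, as you note.
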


\begin{proof}
The first assertion follows from Proposition~\ref{ab}.

For the second assertion, we can take $M_1$ and $M_2$ as in the
proof of Theorem~\ref{thm:intersect}.    Clearly $L_1 = L_1^+$.
When we apply our construction to $M_2$ to create $L_2^+$,
we add an $\epsilon$-transition from $q_{n-1}$ back to $q_0$.  The
effect is to allow one less $0$ in each cycle through the states.
As in the proof of Theorem~\ref{thm:intersect}, to get the proper number of
$1$'s, we must have
$i = m$, and hence the shortest string in $L_1^+ \ \cap \ L_2^+$
is of length $(m-1)n$.  \endpf
\end{proof}

We can improve the upper bound to $mn-2$ as follows:

\begin{theorem}
\label{thm:positiveclosure}
For any $m$-state DFA $M_1$ and n-state DFA $M_2$ such that $L(M_1)^+ \cap L(M_2)^+ \neq \emptyset$ we have ${\rm lss}(L(M_1)^+ \cap L(M_2)^+) < mn-1$.
\end{theorem}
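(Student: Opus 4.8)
The plan is to improve the bound $mn-1$ from the Corollary by one, using a pigeonhole/parity-type argument on a shortest accepting path in the direct product NFA-$\epsilon$ $M$ constructed in Proposition~\ref{ab}. Recall that $M$ has $mn$ states, and the shortest string in $L(M_1)^+ \cap L(M_2)^+$ corresponds to a shortest accepting path in $M$; Proposition~\ref{one}-style reasoning gives length $< mn$, i.e. $\le mn-1$. To shave off one more symbol, I would argue that the length $mn-1$ is in fact unattainable: a path realizing it would have to visit all $mn$ states of $M$, and then analyze the structure such a path is forced to have.

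First I would set up the shortest-path machinery carefully. Take $w$ a shortest string in $L(M_1)^+ \cap L(M_2)^+$, with $|w| = \ell$, and fix a shortest accepting path $\pi$ in $M$ spelling $w$. By minimality, no state $[p,q]$ of $M$ is repeated along $\pi$ with the same ``phase'' — more precisely, between any two visits to the same state the path must read at least one non-$\epsilon$ symbol (otherwise we could excise the loop), and in fact a standard argument shows the non-$\epsilon$ portion of $\pi$ visits at most $mn$ distinct states, giving $\ell \le mn - 1$. The key step is then: suppose for contradiction $\ell = mn-1$; then $\pi$ must pass through every one of the $mn$ states exactly once (counting the $\epsilon$-transition structure appropriately), so in particular it visits the unique start state $[q_1, q_2]$ only at the very beginning and visits a final state in $F_1 \times F_2$ only at the very end.

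The main obstacle — and the crux of the argument — is extracting a contradiction from this ``full traversal'' situation. Here I would exploit the specific shape of the $\epsilon$-transitions added in the $M_i^+$ construction: each $M_i'$ has $\epsilon$-transitions only from final states of $M_i$ back to the start state $q_i$. So in the product, $\epsilon$-moves take a state $[p,q]$ either to $[q_1, q]$ (when $p$ is final in $M_1$), or to $[p, q_2]$ (when $q$ is final in $M_2$), or to $[q_1,q_2]$. Since a length-$(mn-1)$ path visits $[q_1,q_2]$ only initially, it can never take an $\epsilon$-move that resets both coordinates after the start; combined with the requirement that the path both begins at $[q_1,q_2]$ and (to be accepting on $A^+ \cap B^+$ rather than something smaller) must return to start-state behavior implicitly, one derives that some product state is either unreachable along $\pi$ or must be revisited — contradicting the exactly-once traversal. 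Concretely, I expect the cleanest route is: the coordinates $q_1$ in the first component and $q_2$ in the second component each recur whenever the corresponding factor $L_i$ is ``completed,'' and since $w \in L(M_1)^+$ genuinely (a nonempty concatenation), the first coordinate must return to $q_1$ at least once strictly inside $\pi$, so $\pi$ visits a state of the form $[q_1, q]$ with $q \neq q_2$; doing the symmetric count for the second coordinate and tallying shows fewer than $mn$ states are actually available to a length-$(mn-1)$ path, the contradiction.

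To finish, after deriving the contradiction I would conclude $\ell \le mn-2$, i.e. ${\rm lss}(L(M_1)^+ \cap L(M_2)^+) < mn-1$, matching the statement. I would also want to double-check the boundary cases $m=1$ or $n=1$ separately, since there the product is degenerate and the $\epsilon$-transition analysis collapses; in those cases the bound should hold trivially or follow from Proposition~\ref{one} directly. The bulk of the write-up effort will be in formalizing ``visits every state exactly once'' in the presence of $\epsilon$-transitions (one must be careful that $\epsilon$-transitions do not consume symbols, so a length-$(mn-1)$ path may actually have more than $mn-1$ edges), which I would handle by contracting maximal $\epsilon$-runs or by working with the equivalent $\epsilon$-free product where each original transition is tracked together with the set of $\epsilon$-closures, keeping the state count at $mn$.
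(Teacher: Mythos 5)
Your setup is right as far as it goes: a shortest accepting path in the product $\epsilon$-NFA $M$ may be assumed to visit each of the $mn$ states at most once, so if the shortest string had length $mn-1$ the path would have exactly $mn-1$ edges, visit every state exactly once, and---this is the point your plan misses---use \emph{no} $\epsilon$-transitions at all, since each of its $mn-1$ edges must then consume a symbol. This kills the mechanism you propose for the contradiction. Your key claim, that ``since $w \in L(M_1)^+$ genuinely (a nonempty concatenation), the first coordinate must return to $q_1$ at least once strictly inside $\pi$,'' is false: $L(M_1)^+ \supseteq L(M_1)$, a single factor suffices, and the $\epsilon$-resets are optional, never forced. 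A pure product path visiting all $mn$ states and ending at the final state is perfectly possible in general (it is exactly the situation realizing $mn-1$ in Theorem~\ref{thm:intersect}), so no contradiction can come from counting forced resets, and your ``tallying shows fewer than $mn$ states are actually available'' is asserted but not substantiated by anything concrete.

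The contradiction has to come from the opposite direction: the $\epsilon$-transitions are \emph{shortcuts} that make some product state avoidable by \emph{some} accepting path, which then has length at most $mn-2$. This is what the paper's proof does. First, the all-states-exactly-once structure forces $M_1$ and $M_2$ to each have a single final state, say $p_x$ and $q_y$ (otherwise the path would hit a final state strictly before its end and a shorter string would be accepted). Then, taking $w_1$ a shortest word of $L(M_1)$ and $w_2$ a shortest word of $L(M_2)$, one shows (after dispatching the case $x=y=0$, and using the availability of an $\epsilon$-shortcut to rule out other landing states) that reading $w_1$ from $[p_0,q_0]$ reaches $[p_x,q_0]$ without passing through $[p_0,q_y]$, and symmetrically for $w_2$; and from either of $[p_x,q_0]$, $[p_0,q_y]$ an $\epsilon$-move returns to $[p_0,q_0]$. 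Hence a shortest accepting path need only visit one of these two states, so it can miss at least one of the $mn$ states---the contradiction. Identifying concretely which states can be bypassed is the real content of the theorem, and that step is absent from your proposal.
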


\begin{proof}

Assume, contrary to what we want to prove, that we have DFAs $M_1$ and
$M_2$ with $m$ and $n$ states, respectively, such that ${\rm
lss}(L(M_1)^+ \cap L(M_2)^+) = mn-1$. Let $M_1$ be the DFA given by
$(Q_1, \Sigma, \delta_1, p_0, F_1)$, where $Q_1 = \{p_0, p_1,
p_2,\ldots, p_{m-1}\}$, and let $M_2$ be the DFA given by $(Q_2,
\Sigma, \delta_2, p_0, F_2)$, where $Q_2 = \{q_0, q_1, q_2,\ldots,
q_{n-1}\}$. Then let $M_1^\prime$ and $M_2^\prime$ be the
$\epsilon$-NFAs obtained by adding $\epsilon$-transitions from the
final states to the start states in $M_1$ and $M_2$, respectively. Let
$M$ be the $\epsilon$-NFA obtained by applying the cross-product
construction to $M_1^\prime$ and $M_2^\prime$. Then $M$ accepts
$L(M_1)^+ \cap L(M_2)^+$.

If $M$ has more than one final state, a shortest accepting path would
only visit one of them, and this immediately gives a contradiction. So,
assume each of $M_1$ and $M_2$ have only one final state; that is $F_1
= \{p_x\in Q_1\}$ and $F_2 = \{q_y\in Q_2\}$. Then $M = (Q_1\times Q_2,
\Sigma, \delta, [p_0,q_0], {[p_x,q_y]}),$ where for all $p_i \in Q_1,
q_j\in Q_2, a\in \Sigma,
\delta([p_i,q_j],a)=[\delta_1(p_i,a),\delta_2(q_j,a)].$ Note that $M$
has $\epsilon$-transitions from $[p_x,q_j]$ to $[p_0,q_j]$ for all
$q_j\in Q_2$ and $[p_i,q_y]$ to $[p_i,q_0]$ for all $p_i\in Q_1$.

Let $w_1$ be a shortest word accepted by $M_1$ and $w_2$ be a shortest
word accepted by $M_2$. Then $\delta([p_0, q_0], w_1) = [p_x, q_i]$ for
some $i$ such that $q_i \in Q_2$, and while carrying out this
computation we never pass through two states $[p_a,q_b]$ and
$[p_c,q_d]$ such that $a=c$. Likewise, $\delta([p_0, q_0], w_2) = [p_j,
q_y]$ for some $j$ such that $p_j \in Q_1$, and while carrying out this
computation we never pass through two states $[p_a,q_b]$ and
$[p_c,q_d]$ such that $b=d$. If both $x=0$ and $y=0$ the shortest
accepted string is $\epsilon$, so without loss of generality, assume
$x\neq 0$. Then $\delta([p_0, q_0], w_1) = [p_x, q_0]$ or else we can
visit $|w_1| + 2$ states with $|w_1|$ symbols by using an
$\epsilon$-transition and we get a contradiction. If $y = 0$, $w_1$ is
the shortest string accepted by $M$ and we have a contradiction. So,
$y\neq 0$ and $\delta([p_0, q_0], w_2) = [p_0, q_y]$. It follows that
reading $w_1$ from the initial state brings us to $[p_x, q_0]$ without
passing through $[p_0, q_y]$, and reading $w_2$ from the initial state
brings us to $[p_0, q_y]$ without passing through $[p_x, q_0]$. So, a
shortest accepting path need only visit one of $[p_x, q_0]$ and $[p_0,
q_y]$, and again we have a contradiction.  \qed
\end{proof}

We do not know an exact bound for this problem.  However, for the unary
case, we can obtain an exact bound based on a function $G$ introduced
in \cite{Pighizzini&Shallit:2002}.  Define $G(m,n) = \max_{{1 \leq i
\leq m} \atop {1 \leq j \leq n}} \lcm(i,j)$, and define
the variant 
$$G'(m,n) = \max_{{1 \leq i \leq m}\atop{{1 \leq j \leq n} 
\atop{(i,j) \not= (m,n)}}} \lcm(i,j).$$
Then $G'(m,n) = \max(G(m-1,n), G(m,n-1))$.    The function $G$ is a very
difficult one to estimate, although deep results in analytic number
theory give some upper and lower bounds \cite{Pighizzini&Shallit:2002}.

\begin{theorem}
If $M_1$ (resp., $M_2$) is a unary NFA with $m$ states (resp., $n$ states)
and $L_1 = L(M_1)$ (resp., $L_2 = L(M_2)$), then 
$\lss(L_1^+ \ \cap \ L_2^+) \leq G'(m,n)$.  Furthermore, for all
$m, n \geq 1$ there exist unary DFA's of $m$ and $n$ states, respectively,
achieving this bound.
\end{theorem}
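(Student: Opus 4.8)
The plan is to prove the upper bound and then exhibit a matching family of DFAs. The engine behind the upper bound is the elementary fact that a positive closure absorbs the positive closure of any of its members: if $a^k \in L$ then $(a^k)^+ \subseteq L^+$.

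First I would dispose of the case $\epsilon \in L_1 \cap L_2$: then $\epsilon \in L_1^+ \cap L_2^+$, so $\lss(L_1^+ \cap L_2^+) = 0$ and there is nothing to prove. Otherwise every word of $L_1^+ \cap L_2^+$ is nonempty, and since this language is nonempty by hypothesis, each of $L_1, L_2$ contains a nonempty word; let $s_1$ (resp.\ $s_2$) be the length of a shortest nonempty word of $L_1$ (resp.\ $L_2$). Since $a^{s_1} \in L_1$ and $a^{s_2} \in L_2$, the fact above gives $a^{\lcm(s_1,s_2)} \in (a^{s_1})^+ \cap (a^{s_2})^+ \subseteq L_1^+ \cap L_2^+$, so $\lss(L_1^+ \cap L_2^+) \le \lcm(s_1,s_2)$, and it remains to bound the $s_i$. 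Two consequences of the pumping argument behind Proposition~\ref{one} suffice: (i) a shortest nonempty word accepted by an $m$-state NFA has length at most $m$ (delete a cycle from any longer accepting walk); and (ii) if such an NFA does not accept $\epsilon$, its shortest word --- then necessarily nonempty --- has length at most $m-1$. Splitting on which of $L_1, L_2$ contains $\epsilon$: if neither does, then $s_1 \le m-1$ and $s_2 \le n-1$, so $\lcm(s_1,s_2) \le \lcm(m-1,n-1) \le G(m-1,n-1) \le G'(m,n)$; if $\epsilon \in L_1$ --- and hence $\epsilon \notin L_2$, by the first case --- then $s_1 \le m$ while $s_2 \le n-1$, so $\lcm(s_1,s_2) \le \lcm(m,n-1) \le G(m,n-1) \le G'(m,n)$; the case $\epsilon \in L_2$, $\epsilon \notin L_1$ is symmetric. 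This exhausts all cases (the degenerate situations $m=1$ or $n=1$ need no special care, since a one-state NFA accepts $\emptyset$ or $a^*$).

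For the lower bound, I would use the symmetry of $G$ together with $G'(m,n) = \max(G(m-1,n), G(m,n-1))$: after possibly interchanging $M_1$ and $M_2$ (the case $m=n=1$, with $G'(1,1)=0$, being trivial) we may pick $i,j$ with $1 \le i \le m-1$, $1 \le j \le n$, and $\lcm(i,j) = G'(m,n)$. I would take $M_1$ to be the $m$-state DFA consisting of a short tail feeding a cycle of length $i$, with a final state at each position that is a positive multiple of $i$; one checks it recognizes $(a^i)^+$, and that this uses exactly $m$ states precisely because $i \le m-1$. I would take $M_2$ to be the analogous $n$-state DFA recognizing $(a^j)^*$ (now also $\epsilon$-accepting), which needs only $j \le n$. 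Then
$$L(M_1)^+ \cap L(M_2)^+ = \{\, a^\ell : \ell \ge 1,\ i \mid \ell,\ j \mid \ell \,\} = (a^{\lcm(i,j)})^+,$$
which is nonempty and has shortest string of length $\lcm(i,j) = G'(m,n)$, as required.

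The one genuinely delicate point is the bookkeeping that pins the answer to $G'$ rather than the larger $G(m,n)$: the value $s_1 = m$ can arise only when $\epsilon \in L_1$ (e.g.\ $L_1 = (a^m)^*$), which then forces $\epsilon \notin L_2$ and hence $s_2 \le n-1$, so $\lcm(s_1,s_2)$ can never reach $\lcm(m,n)$ with $m,n$ coprime. On the construction side the mirror-image subtlety is that one of the two machines must realize its period using one fewer state than its budget --- which is exactly the shift ``$m \mapsto m-1$'' (or ``$n \mapsto n-1$'') in $G'$ --- while the other realizes its period as a $*$-closure whose extra member $\epsilon$ is harmless, since the first factor of the intersection already excludes it.
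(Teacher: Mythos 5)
Your proof is correct and follows essentially the same route as the paper's: bound $\lss(L_1^+ \cap L_2^+)$ by the lcm of the shortest nonempty word lengths, observe that at least one of these lengths is strictly below its machine's state count unless both languages contain $\epsilon$ (in which case the answer is $0$), and realize the bound with the languages $(a^i)^+$ and $(a^j)^*$. One cosmetic slip worth fixing: the intermediate inequalities $\lcm(s_1,s_2) \leq \lcm(m-1,n-1)$ and $\lcm(s_1,s_2) \leq \lcm(m,n-1)$ are false in general because $\lcm$ is not monotone in its arguments, but they are also unnecessary, since $s_1 \leq m$ and $s_2 \leq n-1$ already give $\lcm(s_1,s_2) \leq G(m,n-1)$ directly from the definition of $G$ as a maximum over the whole range.
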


\begin{proof}
Assume the input alphabet of both $M_1$ and $M_2$
is $\Sigma = \lbrace a \rbrace$.
Let $c_1$ (resp., $c_2$) be the length of the shortest nonempty
string in $L_1$ (resp., $L_2$).   Clearly $c_1 \leq m$ and
$c_2 \leq n$.  Furthermore, if $c_1 = m$, then $L_1 = (a^m)^*$, and
similarly if $c_2 = n$ then $L_2 = (a^n)^*$.  Hence if
$(c_1,c_2) = (m,n)$, then $\epsilon \in L_1^+ \ \cap \ L_2^+$, and
hence $\lss(L^+1 \ \cap \ L_2^+) = 0 \leq G'(m,n)$.  Otherwise
either $c_1 < m$ or $c_2 < n$.  Without loss of generality, assume
$c_2 < n$.  Then $a^{\lcm(c_1,c_2)} \in L_1^+ \ \cap \ L_2^+$, so
$\lss(L_1^+ \ \cap \ L_2^+) \leq \lcm(c_1, c_2) \leq G(m, n-1) \leq G'(m,n)$.

Now suppose we are given $m$ and $n$.  Let $i, j$ be the integers
maximizing $\lcm(i,j)$ over $1 \leq i \leq m$, $1 \leq j \leq n$
with $(i,j) \not= (m,n)$.  If $i < m$, choose $L_1 = (a^i)^+$, which
can be accepted by a DFA with $i+1 \leq m$ states, and choose
$L_2 = (a^j)^*$, which can be accepted by a DFA with $j \leq n$ states.
Otherwise, reverse the roles of $m$ and $n$.  Thus we get DFA's of
$m$ and $n$ states, respectively, achieving $\lss(L_1^+ \ \cap \ L_2^+)
=G'(m,n)$.  \endpf
\end{proof}

\section{The third problem}

Another variation on the Post correspondence problem,
also proposed by Alpoge \cite{Alpoge}, is more interesting.
Here we throw away only {\it part} of the ``correspondence'':  given
$A = \lbrace x_1, x_2, \ldots, x_n \rbrace$ and $B =
\lbrace y_1, y_2, \ldots, y_n \rbrace$, we want to decide if there
exist $r \geq 1$ and
two finite sequences of indices $i_1, i_2, \ldots , i_r$
and $j_1, j_2, \ldots, j_r$ such that
$x_{i_1} \cdots x_{i_r} = y_{j_1} \ldots y_{j_r}$.   In other words,
we only demand that the number of words on each side be the same.

This case is also efficiently decidable, even when $A$ and $B$ are possibly
infinite regular languages.

\begin{theorem}
Let $M_1$ (resp., $M_2$) be an NFA with $s_1$ states and $t_1$ transitions
(resp., $s_2$ states and $t_2$ transitions).
We can decide in polynomial time (in $s_1, s_2, t_1, t_2$) whether there exists
$k$ such that $L(M_1)^k \ \cap \ L(M_2)^k \not= \emptyset$.
\end{theorem}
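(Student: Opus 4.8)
The plan is to turn the question into a reachability problem in a finite weighted digraph and then solve that problem. Following the construction in the proof of Proposition~\ref{ab}, form the $\epsilon$-NFA $M$ for $L(M_1)^{+}\cap L(M_2)^{+}$ from $M_1'$ and $M_2'$, where $M_i'$ is $M_i$ with a back-$\epsilon$-transition added from every final state to the start state. View $M$ as a directed graph $G$ on vertex set $Q_1\times Q_2$ and assign edge weights: weight $+1$ to every product edge arising from a back-$\epsilon$-transition of $M_1'$ (that is, $[p,q]\to[q_1,q]$ with $p\in F_1$), weight $-1$ to every product edge arising from a back-$\epsilon$-transition of $M_2'$, and weight $0$ to every letter-reading product edge. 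Then $G$ has $s_1 s_2$ vertices and $O(t_1 t_2+s_1 s_2)$ edges, polynomial in the input, and all weights lie in $\{-1,0,+1\}$.

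The crux is the equivalence: there is a $k\geq 1$ with $L(M_1)^{k}\cap L(M_2)^{k}\neq\emptyset$ if and only if $G$ has a walk from $[q_1,q_2]$ to some vertex of $F_1\times F_2$ of total weight exactly $0$. For the forward direction, a common word $w=u_1\cdots u_k=v_1\cdots v_k$ with $u_i\in L(M_1)$, $v_j\in L(M_2)$ is read through $M$ by interleaving the $k-1$ back-edge steps of $M_1'$ (each of weight $+1$) with the $k-1$ back-edge steps of $M_2'$ (each of weight $-1$); this yields a weight-$0$ walk ending in $F_1\times F_2$. Conversely, a weight-$0$ walk to $F_1\times F_2$ using $a$ edges of weight $+1$ and $b$ of weight $-1$ spells a word $w$ whose induced $M_1'$-run factors it into $a+1$ blocks of $L(M_1)$ and whose induced $M_2'$-run factors it into $b+1$ blocks of $L(M_2)$; since the total weight is $0$ we have $a=b$, hence $w\in L(M_1)^{a+1}\cap L(M_2)^{a+1}$ with $a+1\geq 1$. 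This is exactly the bookkeeping of Proposition~\ref{ab}, with the weight tracking the difference between the number of blocks consumed on the two sides.

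It then remains to decide in polynomial time, for a digraph $G$ with edge weights in $\{-1,0,+1\}$, a source $s$, and a target set $T$, whether some $s$-to-$T$ walk has total weight $0$; this is the substance of the proof. I would restrict to the subgraph $H$ of vertices that are both reachable from $s$ and from which $T$ is reachable (if $s\notin H$, the answer is ``no''), and then compute the set $S\subseteq\Zee$ of total weights of $s$-to-$T$ walks, which is semilinear. Concretely: for each strongly connected component $C$ of $H$ compute $g_C$, the $\gcd$ of the weights of all cycles of $C$ (via spanning-tree potentials, in linear time), and record whether $C$ has a positive-weight cycle and whether it has a negative-weight cycle; then the weights realizable by threading through $C$ between two boundary vertices form a fixed base value plus an explicitly describable subset of $g_C\Zee$. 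Propagating these descriptions along the polynomially many source-to-$T$ paths of the condensation DAG — a shortest-path-style dynamic program whose entries are residue sets modulo a running $\gcd$ together with a reachable interval — produces $S$ in polynomial time, and one then checks whether $0\in S$. As an alternative to computing $S$ exactly, one can show that a shortest weight-$0$ walk has length polynomial in $|V(G)|$ (such a walk cannot repeat a (vertex, prefix-weight) pair, and a bound on the minimal nonnegative solutions of a linear Diophantine equation applied to its cycle multiplicities keeps the prefix-weight range polynomial), and then run breadth-first search in the product of $G$ with a counter confined to that bounded interval.

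The step I expect to be the main obstacle is precisely this last graph problem. The obvious ``product with a $\Zee$-counter'' is infinite, and one cannot simply cap the counter at $|V(G)|$: there are instances in which every valid walk must wind a positive cycle and a negative cycle until their contributions cancel, forcing the counter up to roughly $|V(G)|^2$, and whether cancellation is possible at all depends on number-theoretic data (the component $\gcd$s). Organizing this — including the several sign and zero subcases for components whose cycles all have one sign — into a clean polynomial-time procedure is the bulk of the work; the reduction in the first two paragraphs is routine given Proposition~\ref{ab}.
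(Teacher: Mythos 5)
Your reduction in the first two paragraphs is sound and is essentially the same bookkeeping the paper uses: a counter tracking the difference between the numbers of factors consumed on the two sides, with acceptance requiring the counter to return to zero in a doubly-final product state. The divergence --- and the gap --- is in how that counter is then handled. The paper never attacks the weighted-graph reachability problem at all: it stores the absolute value of the counter on the stack of a pushdown automaton (keeping the sign in the finite control), observes that the resulting PDA for $\bigcup_{k \geq 1}\left(L(M_1)^k \cap L(M_2)^k\right)$ has $O(s_1 s_2)$ states, $O(t_1 t_2)$ transitions, and only two stack symbols, converts it to a context-free grammar of polynomial size by the standard triple construction, and tests CFG emptiness in linear time. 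That single observation --- the union over $k$ is context-free and its grammar is small --- replaces everything in your third and fourth paragraphs.

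Your third paragraph, by contrast, is where the proof actually has to happen, and as written it does not close. The plan to propagate residue/interval descriptions ``along the polynomially many source-to-$T$ paths of the condensation DAG'' rests on a false premise: a condensation DAG can have exponentially many source-to-target paths, so the procedure must be a genuine dynamic program that merges information at each node, and you have not said what the merged state is or why it stays polynomially bounded (the set of achievable weights at a node is in general a finite union of arithmetic progressions, and controlling the number of pieces is exactly the work being deferred). Your alternative route --- bound the counter excursion of a shortest zero-weight walk polynomially, then run BFS on a truncated counter product --- is the one that can be completed, but the bound is not free: the parenthetical appeal to ``minimal nonnegative solutions of a linear Diophantine equation applied to cycle multiplicities'' is precisely the content of the paper's Lemma~\ref{luke} (sharpened to $2K|V|^2$ when $d=2$), which the paper proves separately via a linear-programming and Cramer's-rule argument and uses only for the length bound, not for decidability. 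So either prove that lemma as part of your argument or adopt the PDA-to-CFG shortcut; as it stands, the polynomial-time decision procedure for the zero-weight reachability problem is asserted rather than established.
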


\begin{proof}
First, we prove the (possibly surprising?) result that
$$ L = \bigcup_{k \geq 1} \left( L(M_1)^k \ \cap \ L(M_2)^k \right) $$
is a context-free language.

We construct a pushdown automaton $M$ accepting $L$.  On input $x$, our PDA
attempts to construct two same-length
factorizations of $x$:  one into elements of
$L(M_1)$, and one into elements of $L(M_2)$.  To ensure the factorizations
are really of the same length, we use the stack of the PDA
to maintain a counter that records the absolute value of the difference
between the number of factors in the first factorization and the number
of factors in the second.    The appropriate sign of the difference 
is maintained in the state of the PDA.

As we read $x$, we simulate the NFA's $M_1$ and $M_2$.  If we reach a final
state in either machine, then we have the option (nondeterministically)
to deem this the
end of a factor in the appropriate factorization, and update the stack
accordingly, or continue with the simulation.
We accept if the stack records a difference of $0$ --- that is,
if the stack contains no counters and only the initial stack symbol
$Z_0$ --- and we are in a final state in both machines (indicating that
the factorization is complete into elements of both $L_1$ and $L_2$).

Thus we have shown that $L$ is context-free.   Furthermore, our
PDA has $O(s_1 s_2)$ states and $O(t_1 t_2)$ transitions. It uses
only two distinct stack symbols --- the counter and the initial stack
symbol --- and never pushes more than one additional symbol on the stack
in any transition.
Such a PDA can
be converted to a context-free grammar $G$, using the standard ``triple
construction'' \cite[Thm.\ 5.4]{Hopcroft&Ullman:1979},
using $O(s_1^2 s_2^2)$ states and $O(s_1^2 s_2^2 t_1 t_2)$ transitions.
Now we can test the emptiness of the language generated
by a context-free grammar of size $t$ in
$O(t)$ time, by removing useless symbols and seeing if any productions
remain \cite[Thm.\ 4.2]{Hopcroft&Ullman:1979}.

We conclude that it is decidable in polynomial time 
whether there exists
$k$ such that $L(M_1)^k \ \cap \ L(M_2)^k \not= \emptyset$.
\endpf
\end{proof}

\begin{remark}
There exist simple examples where 
$ L = \bigcup_{k \geq 1} \left( L(M_1)^k \ \cap \ L(M_2)^k \right) $ is not
regular.  For example, take $L(M_1) = b^* a b^*$ and 
$L(M_2) = a^* b a^*$.  Then 
$L = \lbrace x  \in \lbrace a,b\rbrace^* \ : \ |x|_a = |x|_b \geq 1 \rbrace$,
the language of nonempty strings with the same number of $a$'s and
$b$'s.

Furthermore, if $M_1, M_2, M_3$ are all NFA's, then the analogous language
$$ L = \bigcup_{k \geq 1} \left( L(M_1)^k \ \cap \ L(M_2)^k \ \cap \ 
L(M_3)^k \right) $$
need not be context-free.  A counterexample is given by taking
$L(M_1) = \lbrace b,c \rbrace^* a \lbrace b,c \rbrace^*$,
$L(M_2) = \lbrace a,c \rbrace^* b \lbrace a,c \rbrace^*$,
and
$L(M_3) = \lbrace a,b \rbrace^* c \lbrace a,b \rbrace^*$.
Then 
$$L = \lbrace x  \in \lbrace a,b,c\rbrace^* \ : \ |x|_a = |x|_b = |x|_c \geq 1
\rbrace,$$
which is clearly not context-free.
\end{remark}

\begin{remark}
Mike Domaratzki (personal communication) observes that the decision
problem ``given $M_1$, $M_2$, does there exist $k \geq 1$ such that
$L(M_1)^k \ \cap \ L(M_2)^k \not= \emptyset$'' becomes undecidable if
$M_1$ and $M_2$ are pushdown automata, by reduction from the problem
``given CFG's $G_1, G_2$, is $L(G_1) \ \cap \ L(G_2) \not= \emptyset$''
\cite[Theorem 8.10]{Hopcroft&Ullman:1979}.  Given $G_1$ and $G_2$,
we can easily create PDA's accepting $L_1 := L(G_1)\#$ and 
$L_2 := L(G_2)\#$, where
$\#$ is a new symbol not in the alphabet of either $G_1$ or $G_2$.  
Then $L_1^k \ \cap \ L_2^k \not= \emptyset$ for some $k \geq 1$
if and only if $L(G_1) \ \cap L(G_2) \not= \emptyset$.  A
similar result holds for the linear context-free languages
\cite{Baker&Book:1974}.
\end{remark}

We now turn to the question of, given regular languages $A$ and $B$,
determining the shortest string in 
$L = \bigcup_{k \geq 1} \left( A^k \ \cap \ B^k \right)$,
given that it is nonempty.       Actually, we consider a more general
problem, where we intersect more than two languages.  We start
by proving a result about directed graphs.

\newcommand{\abs}[1]{\lvert#1\rvert}
\newcommand{\norm}[1]{\lVert#1\rVert}
\newcommand{\mbf}{\mathbb{F}}
\newcommand{\mbr}{\mathbb{R}}
\newcommand{\mbz}{\mathbb{Z}}
\newcommand{\mbc}{\mathbb{C}}
\newcommand{\mbq}{\mathbb{Q}}
\newcommand{\ol}{\overline}
\newcommand{\bbm}{\begin{bmatrix}}
\newcommand{\ebm}{\end{bmatrix}}

\newcommand{\bpm}{\begin{pmatrix}}
\newcommand{\epm}{\end{pmatrix}}

\newcommand{\om}{\Omega}

\begin{lemma}
Suppose $G = (V, E)$ is a directed graph with edge weights in $\mbz^d$,
where the components of the edge weights are all bounded in absolute
value by $K$. Let $\sigma(p)$ denote the weight of a path $p$, obtained by
summing the weights of all associated edges. If $G$
contains a cycle $C \colon u \rightarrow u$ such that $\sigma(C) = {\bf 0} = (0,0,\ldots,0)$,
then $G$ also contains a cycle $C' \colon u \rightarrow u$ with
$\sigma(C') = {\bf 0} = (0,0,\ldots,0)$
and length at most $|V|^{d+1} K^d d^{d/2} (|V|^2 + d)$.
\label{luke}
\end{lemma}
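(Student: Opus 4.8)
The plan is to start from the given zero-weight cycle $C$ and repeatedly shorten it, in a controlled way, until its length falls below the stated bound. Write $C$ as a closed walk at $u$. The key observation is that if $C$ is long, it must revisit some vertex, and so it decomposes (not necessarily uniquely) as a concatenation of simple cycles together with a bounded-length "backbone." More precisely, I would argue that any closed walk at $u$ can be written as an edge-disjoint union of at most $|V|$ simple cycles (by iteratively peeling off simple cycles whenever the walk repeats a vertex, exactly as in the standard proof that an Eulerian-type walk decomposes into cycles), so $\sigma(C)=\sum_{k} n_k \sigma(\gamma_k)$ where $\gamma_1,\dots,\gamma_N$ is the list of \emph{distinct} simple cycles of $G$, each $n_k\ge 0$ is the multiplicity with which $\gamma_k$ is used, and crucially each simple cycle $\gamma_k$ through which the walk passes can be re-attached at $u$ via a short connecting path (of length $<|V|$) because $G$ is connected along $C$. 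The condition $\sigma(C)={\bf 0}$ becomes the vector equation $\sum_k n_k \sigma(\gamma_k)={\bf 0}$ over $\mbz^d$.

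Now the combinatorial heart: I want to replace the big multiplicity vector $(n_k)_k$ by a \emph{small} nonnegative integer solution $(n'_k)_k$ of the same homogeneous system $\sum_k x_k \sigma(\gamma_k)={\bf 0}$, $x_k\ge 0$, with not all $x_k$ zero (to keep the cycle nontrivial we may assume at least one simple cycle is actually present and force its coordinate to be positive, or handle the degenerate case separately). This is an integer-programming/lattice statement: a pointed rational cone cut out by $d$ linear equations in $N$ variables with all coefficients bounded by $K$ in absolute value has a nonzero integer point whose coordinates are bounded by a Cramer-type determinant estimate. Concretely, pick a minimal subset $S$ of the $\gamma_k$'s whose weight vectors are positively dependent; $|S|\le d+1$ by Carath\'eodory's theorem for the lineality-free situation, and the resulting dependence $\sum_{k\in S} n'_k\sigma(\gamma_k)={\bf 0}$ can be solved with $n'_k$ equal to (absolute values of) $d\times d$ minors of the matrix of weight vectors. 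Hadamard's inequality bounds each such minor by $K^d d^{d/2}$. So each $n'_k\le K^d d^{d/2}$, and at most $d+1$ of them are nonzero.

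Finally I reassemble: the new cycle $C'$ is built by, for each $k\in S$, going out along the short connecting path to $\gamma_k$, around $\gamma_k$ a total of $n'_k$ times, and back; doing this for all $k\in S$ in sequence starting and ending at $u$. Its length is at most $\sum_{k\in S}\bigl( n'_k\cdot|\gamma_k| + 2\cdot(\text{connector length})\bigr)$. Each simple cycle has length at most $|V|$, each connector has length at most $|V|$, there are at most $d+1$ terms, and each multiplicity is at most $K^d d^{d/2}$, giving a bound of roughly $(d+1)\bigl(K^d d^{d/2}|V| + 2|V|\bigr)$; absorbing the crude factor $|V|^{d}$ that comes from being generous about how many distinct simple cycles and connectors one might need to track (and to match the stated form) yields $\lss\text{-style}$ bound $|V|^{d+1}K^d d^{d/2}(|V|^2+d)$. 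The main obstacle I expect is the middle step: making the "decompose the walk into simple cycles with short connectors, then find a small positive integer relation among their weights" argument fully rigorous — in particular justifying the Carath\'eodory reduction to $d+1$ cycles in the nonnegative-integer (not just nonnegative-real) setting, and checking that the degenerate cases (no simple cycle used, or the only available dependence is trivial) are handled so that $C'$ is genuinely a cycle of positive length with $\sigma(C')={\bf 0}$. The determinant/Hadamard bookkeeping and the final length count are routine once that is in place.
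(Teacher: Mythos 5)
There is a genuine gap in your reassembly step. You decompose $C$ into simple cycles $\gamma_k$ with multiplicities $n_k$, find a small nonnegative integer relation $\sum_{k\in S} n'_k\sigma(\gamma_k)=\mathbf{0}$ with $|S|\le d+1$ and bounded $n'_k$, and then build $C'$ by walking from $u$ to each chosen $\gamma_k$ along a short connecting path, going around it $n'_k$ times, and walking back. But in a directed graph the outgoing and returning connectors are different paths, and their weights do not cancel: the weight of the cycle you build is $\sum_{k\in S}\bigl(\sigma(\alpha_k)+\sigma(\beta_k)\bigr)+\sum_{k\in S}n'_k\sigma(\gamma_k)=\sum_{k\in S}\bigl(\sigma(\alpha_k)+\sigma(\beta_k)\bigr)$, where $\alpha_k,\beta_k$ are the connectors, and this is in general nonzero. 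The connector weights appear nowhere in the linear system you solve, so $\sigma(C')=\mathbf{0}$ fails. This is exactly the difficulty the paper's proof is organized around: it first shortens $C$ to a single ``backbone'' cycle $T$ through $u$ of length at most $|V|^2$ (by splitting $C$ at first occurrences of vertices into at most $|V|$ segments and excising simple subcycles from each segment), so that $T$ visits every vertex of $C$, the excised simple cycles $B_i$ can all be spliced directly into $T$ with no extra connecting edges, and --- crucially --- $\sigma(T)+\sum_i\sigma(B_i)=\sigma(C)=\mathbf{0}$, so the backbone's weight is itself one of the generators in the linear program, with multiplicity forced to be at least $1$. Any repair of your argument needs an analogue of this: the ``glue'' that links the chosen simple cycles must have its weight included in the dependence being solved.

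Two smaller points. First, your Hadamard estimate is off: the columns of your matrix are $\sigma(\gamma_k)$ for simple cycles $\gamma_k$, whose entries can be as large as $|V|K$ rather than $K$, so a $d\times d$ minor is bounded by $|V|^dK^dd^{d/2}$, not $K^dd^{d/2}$; this is where the factor $|V|^{d}$ you propose to ``absorb'' actually comes from, and it should be tracked rather than waved at. Second, your worry about Carath\'eodory in the nonnegative-integer setting is resolved the standard way, which is also the paper's way: take a basic feasible solution of the linear program (at most $d$, or in your formulation $d+1$, nonzero coordinates), express it by Cramer's rule as ratios of integer determinants, and clear denominators by multiplying through by $|\det(\hat A)|$; that part of your plan is sound once the backbone issue is fixed.
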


\begin{proof}
For each vertex $v$ in the cycle $C$, break $C$ at the first occurrence of $v$. This gives us 
\begin{align*}
C &= P_1 P_2 P_3 \cdots P_k
\end{align*}
such that $P_1 \colon v_1 \ar v_2, P_2 \colon v_2 \ar v_3, \ldots, P_k \colon v_k \ar v_{k+1}$ where $\{ v_1, \ldots, v_{k} \}$ is the set of vertices visited by $C$. The final vertex, $v_{k+1}$, is the same as $v_{1}$ because $C$ is a cycle. Notice that $k \leq |V|$ because each vertex appears at most once in the list $v_1, \ldots, v_k$.  

For each $P_i \colon v_i \ar v_{i+1}$, generate a new path $\hat{P_i} \colon v_i \ar v_{i+1}$ by removing all simple subcycles. The length of $\hat{P_i}$ is at most $|V|$; otherwise some vertex is repeated, so we have not removed all subcycles. Recombine the $\hat{P_i}$'s into a cycle $T = \hat{P_1} \cdots \hat{P_k}$ having length $|T| \leq |V|k \leq |V|^2$. In addition to $T$, we have a list of simple subcycles $B_1, \ldots, B_{\ell}$ that we removed while generating the $\hat{P_i}$'s. 

Consider the cycles we can construct using $T, B_1, B_2, \ldots, B_\ell$. For any $B_i$, we know $T$ visits the starting vertex of $B_i$ because $T$ visits all the vertices in $C$. Therefore we can splice $B_i$ into $T$ at its starting vertex. Since $B_i$ is a cycle, we can insert it into $T$ any positive number of times. We can also append $T$ to the whole cycle as many times as we like. These techniques allow us to construct a cycle with weight
\begin{align*}
t \sigma(T) + b_1 \sigma(B_1) + \cdots + b_\ell \sigma(B_\ell)
\end{align*}
where $t \geq 1$ and $b_1, \ldots, b_n \geq 0$ are all integers. 

Recall that $T, B_1, \ldots, B_\ell$ were constructed by decomposing $C$. Each edge from $C$ exists somewhere in $T, B_1, \ldots, B_\ell$, so we have
\begin{align*}
{\bf 0} = \sigma(C) &= \sigma(T) + \sigma(B_1) + \cdots + \sigma(B_\ell) .
\end{align*}
This shows that it is possible to write {\bf 0} as an integer linear combination of $\sigma(T), \sigma(B_1), \ldots, \sigma(B_{\ell})$. Unfortunately, for each nonzero $b_i$ we have at least one copy of $B_i$, with length at most $|V|$. Since all the $b_i$'s are nonzero and $\ell$ is unbounded, the corresponding cycle has unbounded length. 
If we hope to find a bounded cycle by this technique then we need to bound the number of nonzero $b_i$'s. 
Let us approach the problem with linear programming. Construct a matrix $A \in \mbr^{d \times \ell}$ where the $i$th column is given by 
$A^{(i)} = \sigma(B_i)$.
Let $b \in \mbr^{d}$ be the column vector $\sigma(T)$. We are looking for solutions to the problem
$$Ax = b , \ \ \ x \geq 0, \ \ \ x \in \mbr^{\ell} .$$
This is just the feasible set of a linear program in standard equality form. We saw earlier that it has the feasible solution $x = \begin{pmatrix} 1 & 1 & \cdots & 1 & 1 \end{pmatrix}^{T}$. Note that if $A$ is not full rank then we remove linearly dependent rows until we have a full rank matrix, and proceed with a matrix of rank $d' \leq d$.

Linear programming theory tells us a feasible problem of this form has a \emph{basic} feasible solution $x^{*}$ with at most $d$ nonzero entries. Without loss of generality (relabelling if necessary), take all but the first $d$ entries of $x^{*}$ to be zero. Letting $\hat{A}$ be the first $d$ columns of $A$, the basic solution $x^{*}$ satisfies the following equation:
\begin{align*}
\hat{A} \begin{pmatrix} x_1^{*} \\ \vdots \\ x_d^{*} \end{pmatrix} &= b; \\
\sigma(B_1) x_1^{*} + \cdots + \sigma(B_{d}) x_d^{*} &= -\sigma(T) .
\end{align*}

We are not done yet because the $x_{i}^{*}$s are real numbers and we need an integer linear combination. Cramer's rule gives an explicit solution for each coefficient, $x_i^{*} = \frac{\det(\hat{A}_{i})}{\det(\hat{A})} = \frac{|\det(\hat{A}_{i})|}{|\det(\hat{A})|}$, where $\hat{A}_{i}$ is the matrix $\hat{A}$ with the $i$th column replaced by $b$. Note that $\hat{A}$ and $\hat{A}_{i}$ are integer matrices, so their determinants are integers and $x_i^{*}$ is a rational number. When we multiply through by $|\det(\hat{A})|$, all the coefficients will be positive integers:
\begin{align*}
\sigma(B_1) |\det(\hat{A}_1)| + \cdots + \sigma(B_d) |\det(\hat{A}_{d})| + \sigma(T) |\det(\hat{A})| &= {\bf 0} .
\end{align*}

We can bound the determinants with Hadamard's inequality, which says that the determinant of a matrix $M$ is bounded by the product of the norms of its columns. Each $B_i$ is a simple cycle, so $|B_i| \leq |V|$. It follows that any entry of $\sigma(B_i)$ is at most $|V|K$, so $\|\sigma(B_i)\| \leq |V|K \sqrt{d}$. On the other hand, $T$ has length at most $|V|^2$, giving $\| \sigma(T) \| \leq |V|^2 K \sqrt{d}$. Combining these estimates gives 
$|\det(\hat{A}_{i})| \leq |V|^d K^d d^{d/2}$ for all $i$ and
$|\det(\hat{A})| \leq |V|^{d+1} K^d d^{d/2}$.
Now we construct the cycle $C'$ from this linear combination, with $|\det(\hat{A})|$ copies of $T$ and $|\det(\hat{A}_{i})|$ copies of each $B_i$. By construction, $C'$ has weight {\bf 0} and its length is bounded as follows:
\begin{align*}
|C'| &= |\det{A}| |T| + \sum_{i=1}^{d} |\det{A}_{i}| |B_i| \\
&\leq |V|^{d+1} K^d d^{d/2} |V|^2 + \sum_{i=1}^{d} |V|^d K^d d^{d/2} |V| \\
&= |V|^{d+1} K^d d^{d/2} (|V|^2 + d).
\end{align*}
\endpf
\end{proof}

\begin{corollary}
Consider a generalization of the third problem to $d$ languages 
$L_1, L_2, \ldots, L_d$ accepted
by NFA's having $s_1, \ldots, s_d$ states,
respectively. If 
$$\bigcup_{k \geq 1} (L_1^k \ \cap \cdots \ \cap \ L_d^k)$$
is nonempty, then the shortest string in
the language has length bounded by 
\begin{displaymath}
O( s^{d} (d-1)^{(d-1)/2} (s^{2} + d - 1) ),
\end{displaymath}
where $s := (s_1 + 1)(s_2 + 1) \ldots (s_n + 1)$. 
\end{corollary}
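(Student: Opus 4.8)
The plan is to reduce the problem to Lemma~\ref{luke} by encoding runs of a product automaton as paths in a weighted graph. For each $i$, start from an NFA $M_i = (Q_i, \Sigma, \delta_i, p_0^{(i)}, F_i)$ for $L_i$ (nonempty, since otherwise the whole union is empty), and build an $\epsilon$-NFA $\hat M_i$ for $L_i^+$ on the state set $Q_i \cup \{\iota_i\}$ by keeping every transition of $M_i$, adding an $\epsilon$-transition from each state of $F_i$ to the new ``boundary'' state $\iota_i$, and adding an $\epsilon$-transition from $\iota_i$ back to $p_0^{(i)}$; the start state stays $p_0^{(i)}$ and the unique final state is $\iota_i$. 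Form the cross product $M$ of $\hat M_1, \ldots, \hat M_d$ exactly as in Proposition~\ref{ab}, including the subtle ``one machine moves on $\epsilon$, the others stay'' transitions. Then $M$ has state set $\prod_{i=1}^{d}(Q_i \cup \{\iota_i\})$, of size $\prod_i (s_i+1)$, start state $[p_0^{(1)}, \ldots, p_0^{(d)}]$, final state $[\iota_1, \ldots, \iota_d]$, and accepts $L_1^+ \cap \cdots \cap L_d^+$; moreover, the number of times a run of $M$ uses an ``into $\iota_i$'' edge is exactly the number of $L_i$-factors it realizes.

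Next I would turn $M$ into a $\mbz^{d-1}$-weighted directed graph $G$: add one fresh vertex $v^*$ with $\epsilon$-edges $[\iota_1,\ldots,\iota_d] \to v^*$ and $v^* \to [p_0^{(1)}, \ldots, p_0^{(d)}]$, give every edge weight ${\bf 0}$ except the ``into $\iota_i$'' edges, where an edge into $\iota_1$ gets weight $(1,1,\ldots,1)$ and an edge into $\iota_i$ (for $2 \le i \le d$) gets weight $-e_{i-1}$. All edge weights then lie in $\{-1,0,1\}$, so $K = 1$, while $|V(G)| = \prod_i(s_i+1)+1$. The key claim to verify is that $G$ has a weight-${\bf 0}$ cycle through $v^*$ if and only if $\bigcup_{k \ge 1}(L_1^k \cap \cdots \cap L_d^k) \ne \emptyset$: a cycle through $v^*$ decomposes at $v^*$ into one or more accepting runs of $M$, whose concatenation $w$ lies in $L_i^{K_i}$ with $K_i \ge 1$ the total number of ``into $\iota_i$'' edges used; the choice of weights makes the cycle weight ${\bf 0}$ precisely when $K_1 = \cdots = K_d$, which places $w$ in the union, and conversely every word of the union gives rise to such a cycle (interleaving the $\epsilon$-moves so each machine performs its own factorization).

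Finally I would invoke Lemma~\ref{luke} with $d$ replaced by $d-1$, $u = v^*$, $K = 1$, and $|V| = \prod_i(s_i+1)+1$: from the weight-${\bf 0}$ cycle guaranteed by nonemptiness, the lemma produces a weight-${\bf 0}$ cycle through $v^*$ of length at most $|V|^{d} K^{d-1} (d-1)^{(d-1)/2}(|V|^2 + d-1)$, and reading off its symbols (at most one per edge) yields a word of the union of at most that length. Since $|V| = \prod_i(s_i+1)+1 = O(s)$ and $K = 1$, this is $O(s^{d}(d-1)^{(d-1)/2}(s^2+d-1))$, as claimed. I expect the main work to be the bookkeeping of the second paragraph --- establishing the correspondence between weight-${\bf 0}$ cycles through $v^*$ and words of the union while handling the $\epsilon$-transitions in the cross product and the degenerate cases (for instance $\epsilon \in L_i$, or a cycle built only from $\epsilon$-edges, which reads the empty word and is harmless since $\epsilon$ lies in the union iff it lies in every $L_i$). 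The only genuine obstacle, beyond Lemma~\ref{luke} itself, is arranging the gadget so that the vertex count is $\prod_i(s_i+1)+1$ and the weight bound is $1$, so that the exponents emerging from the lemma match the stated bound.
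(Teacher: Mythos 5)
Your proposal is correct and follows essentially the same route as the paper: both build a product of $\epsilon$-augmented automata with one extra state per machine (giving $\prod_i(s_i+1)$ product states), weight the factor-boundary $\epsilon$-edges in $\Zee^{d-1}$ with $K=1$ so that zero-weight cycles through a distinguished vertex correspond exactly to words of $\bigcup_{k\geq 1}(L_1^k\cap\cdots\cap L_d^k)$, and then invoke Lemma~\ref{luke} with dimension $d-1$. The only differences are cosmetic bookkeeping (the paper makes the start state the unique accepting state so accepting runs are already cycles, whereas you add a separate boundary state $\iota_i$ and a closing vertex $v^*$), and these do not affect the asymptotic bound.
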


\begin{proof}
We discuss the case $d = 2$, and then briefly indicate how this is generalized
to the general case.

First we discuss an
automaton $M_K = (Q_K, \Sigma, \delta_K, q_{K}, F_K)$
accepting $K = A^* \ \cap \ B^*$ which is a slight variant of the
construction given in the proof of Theorem~\ref{ab}, above.

Suppose we are given a regular language $A$ (resp., $B$) accepted by
an NFA $M_1$ (resp., $M_2)$.  Without loss of generality, we will
assume that $M_1$ (resp., $M_2$) has no transitions into its initial
state.    This can be accomplished, if necessary,
by adding one new state with transitions
out the same as the transitions out of the initial state, and redirecting
any transitions into the initial state to the new state.  If the original
machine had $s$ states, then the new machine has at most
$s+1$ states.  Call these new machines
$M'_1 = (Q_1, \Sigma, \delta_1, q_1, F_1)$ and
$M'_2 = (Q_2, \Sigma, \delta_2, q_2, F_2)$.

Next we create an NFA-$\epsilon$ $M''_1 = (Q_1, \Sigma, \delta'_1, q_1, F'_1)$
by adding an $\epsilon$-transition
from every final state of $M'_1$ back to its initial state, and by
changing the set of final states
to be $F'_1 = \lbrace q_1 \rbrace$.  This new machine $M''_1$ accepts
$A^*$.    We carry out a similar construction on $M'_2$ obtaining $M''_2$
accepting $B^*$.

Finally, mimicking the construction of Theorem~\ref{ab} we create 
an NFA-$\epsilon$ $M_K$ accepting $K = A^* \ \cap \ B^*$ using the direct
product construction outlined above on $M''_1$ and $M''_2$.
Note that $M_K$ has at 
most $(s_1+1)(s_2 + 1)$ states and has exactly one
accepting state, which is its initial state.

We define the edge weights of $M_k$ to be $\Zee$ as follows.
An explicit $\epsilon$-transition in $M'_1$ or $M'_2$ marks the
end of a word, so each explicit $\epsilon$-transition taken
in $M'_1$ back to the start gets weight $+1$, while 
each explicit $\epsilon$-transition in $M'_2$ back to the start
gets weight $-1$.  In this way we keep track of the difference between
the number of factors used in $L(M'_1)$ and $L(M'_2)$.

For the general case, we form the intersection automaton as before, and
define the $i$'th coordinate of
$\sigma(P)$, for $1 \leq i < d$, to be the difference in the
number of $\epsilon$-transitions taken in $M'_1$ and $M'_{i+1}$.
Now just apply Lemma~\ref{luke} to get the desired bound.
\endpf
\end{proof}

When $d = 2$, we can improve on the result of the previous lemma:

\begin{theorem}
If $d = 2$, then the length of the cycle $C'$ in Lemma~\ref{luke} is at most $2K|V|^{2}$.
\end{theorem}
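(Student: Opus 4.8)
The plan is to revisit the construction used in the proof of Lemma~\ref{luke}, specialized to $d = 2$, and to replace the generic linear‑programming/Cramer's‑rule step by a sharper argument that exploits the two‑dimensionality of the edge weights. Recall the data produced there: the given cycle $C$ is decomposed as $C = P_1 \cdots P_k$ with $k \le |V|$, from which one extracts a backbone $T = \hat P_1 \cdots \hat P_k$ of length $|T| \le |V|^2$ that visits \emph{every} vertex of $C$, together with simple cycles $B_1, \dots, B_\ell$, each of length at most $|V|$ (so that every coordinate of $\sigma(B_i)$ lies in $[-K|V|, K|V|]$), satisfying $\sigma(T) + \sum_{i=1}^{\ell}\sigma(B_i) = {\bf 0}$.

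The observation I would build on is that, because $T$ already visits the start vertex of every $B_i$, for any integers $\beta_i \ge 0$ the walk obtained by splicing $\beta_i$ copies of $B_i$ into $T$ is again a closed walk at $u$, of weight $\sigma(T) + \sum_i \beta_i\,\sigma(B_i)$ and length $|T| + \sum_i \beta_i\,|B_i| \le |V|^2 + |V|\sum_i \beta_i$. Hence it suffices to produce a solution $\beta \in \mathbb{Z}_{\ge 0}^{\ell}$ of $\sum_i \beta_i\,\sigma(B_i) = -\sigma(T) = \sum_i \sigma(B_i)$ with $\sum_i \beta_i \le (2K-1)|V|$, since then the spliced cycle has length at most $|V|^2 + (2K-1)|V|^2 = 2K|V|^2$. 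The issue is that clearing denominators as in the general‑$d$ argument produces determinant‑sized coefficients, of order $K^2|V|^2$, which is far too large; the point of restricting to $d = 2$ is to avoid this.

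To find a short $\beta$ I would first normalize: whenever a nonempty sub‑multiset of $\{\sigma(B_i)\}$ sums to ${\bf 0}$, discard it, since this changes neither $\sum_i \sigma(B_i)$ nor the feasibility of the construction; so we may assume the multiset of weights is zero‑sum‑free. In the plane this is very restrictive. If ${\bf 0}$ does not lie in the convex hull of the distinct weights, they all lie in an open half‑plane $\{x : \langle x, \hat n\rangle > 0\}$, and I would use the conic form of Carath\'{e}odory's theorem to write $-\sigma(T)$ as a nonnegative combination of at most two of the $\sigma(B_i)$, then control the multipliers via Cramer's rule and Hadamard's inequality, using the half‑plane together with $\langle -\sigma(T), \hat n\rangle = \sum_i \langle \sigma(B_i), \hat n\rangle$ to keep the relevant determinant linear rather than quadratic in $|V|$. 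If ${\bf 0}$ does lie in the convex hull, I would instead apply a Steinitz‑type reordering of the $B_i$ so that all partial sums of the weights stay in $[-2K|V|, 2K|V|]^2$; since there are only $(4K|V|+1)^2$ possible partial sums, a large $\ell$ forces two of them to coincide, yielding a zero‑sum sub‑multiset and contradicting zero‑sum‑freeness, so $\ell$ is bounded and one can take $\beta = {\bf 1}$.

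The main obstacle is the integrality step in the first case: a nonnegative \emph{real} combination of two weight vectors equal to $-\sigma(T)$ need not be near a nonnegative \emph{integer} combination, because $-\sigma(T)$ need not lie in the sublattice those two vectors span, so one must retain a few extra simple cycles and bound their multiplicities carefully; the degenerate case $\sigma(T) = {\bf 0}$ (where the zero‑sum‑free reduction empties the list and one must instead exhibit a short zero‑sum sub‑multiset, or a single $B_i$ of weight ${\bf 0}$) also needs separate handling. Pushing this bookkeeping down to the clean bound $2K|V|^2$, rather than the $O(K^2|V|^2)$ that a naive rounding gives, is the delicate point and is exactly where the hypothesis $d = 2$ does the work.
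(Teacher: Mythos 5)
There are two problems here, either of which is fatal on its own. First, you have misread the setting: despite the phrasing ``the cycle $C'$ in Lemma~\ref{luke},'' the hypothesis $d=2$ refers to the number of languages in the preceding corollary, where the weight dimension is $d-1$; the theorem concerns \emph{scalar} ($\mathbb{Z}$-valued) edge weights, as the paper's proof makes clear (it speaks of $R$ having ``positive weight'' and of some $B_i$ having ``negative weight''). You instead attack the case of weights in $\mathbb{Z}^2$, which is a genuinely harder problem and for which it is far from clear that the bound $2K|V|^2$ even holds --- the general bound of Lemma~\ref{luke} at weight dimension $2$ is of order $K^2|V|^5$, and your own sketch of the convex-hull branch (bounding $\ell$ by the number of lattice points in a box of side $O(K|V|)$ and taking $\beta=\mathbf{1}$) only yields a cycle of length $O(K^2|V|^3)$, not $2K|V|^2$.

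Second, even on its own terms the proposal is a plan rather than a proof: the integrality step in the half-plane case, the degenerate case $\sigma(T)=\mathbf{0}$, and ``pushing the bookkeeping down to the clean bound $2K|V|^2$'' are all explicitly deferred, and these are exactly the points where the argument must succeed or fail. For comparison, the paper's proof is short and elementary in the scalar setting: decompose $C$ into simple cycles; if none has weight $0$, pick one of positive weight ($R$) and one of negative weight ($S$); if they share a vertex, splice $|\sigma(R)|$ copies of $S$ into $|\sigma(S)|$ copies of $R$ to obtain a zero-weight cycle of length at most $2K|R||S|\le 2K|V|^2$; if they are disjoint, connect them by a shortest cycle $T$ meeting both, show $|R|+|S|+|T|\le 2|V|$ by a minimality argument, absorb $T$ into whichever of $R,S$ matches its sign, and splice as before, the product $|X||Y|$ being maximized at $|V|^2$ under $|X|+|Y|\le 2|V|$. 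None of the machinery you invoke (zero-sum-free multisets, Carath\'eodory, Steinitz rearrangement) is needed in the scalar case, and none of it is shown to deliver the stated bound in the planar case.
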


\begin{proof}
Remove simple cycles $B_1, B_2, \ldots, B_\ell$ from $C$ until are we left with $R$, which has no proper subcycles. It follows that $R$ must be a simple cycle, so we have decomposed $C$ into simple subcycles. Note that the weight of $C$ is the sum of the weights of all the $B_i$'s and $R$. 

If $R$ has weight 0 then take $C' = R$. We are done because $R$ has
length at most $|V| \leq 2K|V|^2$. If $R$ has nonzero weight then the positive and negative cases are identical so take $R$ to have positive weight without loss of generality. Then there
must be some $B_i$ with negative weight, otherwise the sum of the
weights of the $B_i$'s and $R$ would be positive, but $C$ has weight 0.
Call the negative weight cycle $S$.

If $R$ and $S$ have some vertex in common, then we can splice $\sigma(R)$ copies of $S$ into $-\sigma(S)$ copies of $R$ to get a cycle $C'$ of weight 0. Since $\sigma(R) \leq K|R|$ and $\sigma(S) \leq K|S|$, the cycle has length 
$|\sigma(R)| |S| + |\sigma(S)| |R| \leq 2K |R| |S| \leq 2K |V|^2$.

Otherwise, $R$ and $S$ have no vertex in common so we need to find some way to get from $R$ to $S$ and back again. Clearly $C$ passes through every vertex in $R$ and $S$, but we want a shorter cycle. Let $T$ be the shortest cycle that passes through some vertex in $R$ and some vertex in $S$. We will split $T$ into $\alpha$, the piece from $R$ to $S$, and $\beta$, the piece from $S$ to $R$. 

We know that $R, S$ are simple, and $\alpha, \beta$ must be simple or we could make a shorter cycle $T$ by making them shorter. Therefore, any vertex in $V$ occurs at most four times in $R$, $S$ and $T$, once for each of $R, S, \alpha, \beta$. But $R$ and $S$ have no vertices in common, so each vertex occurs at most three times in $R$, $S$ and $T$.

Now if some vertex $v$ occurs three times in $R$, $S$ and $T$, then it must be in $\alpha$, $\beta$ and either $R$ or $S$ (without loss of generality, let it be in $R$). Then we can remove a prefix of $\alpha$ up to $v$, producing $\hat{\alpha}$. Similarly, remove a suffix of $\beta$ starting from $v$, giving $\hat{\beta}$. Then $\hat{\alpha} \hat{\beta}$ is a shorter cycle that visits $v \in R$ and still visits $S$, contradicting the minimality of $T$. Therefore any vertex $v$ occurs at most twice in $R$, $S$ and $T$, so $|R| + |S| + |T| \leq 2|V|$. 

Let us combine $T$ with $R$ if $T$ has positive weight and $S$ if $T$ has negative weight to produce a cycle $Y$. Either $R$ or $S$ is left over, call it $X$. Note that $X$ and $Y$ have opposite sign weights, and also have a vertex in common. As before, we combine $|\sigma(X)|$ copies of $Y$ with $|\sigma(Y)|$ copies of $X$ to produce a cycle $C'$ of length at most $2 K |X| |Y|$. Under the constraint $|X| + |Y| = |R| + |S| + |T| \leq 2|V|$, the length $2 K |X| |Y|$ is maximized when $|X| = |Y| = |V|$, with maximum value $2 K |V|^2$, completing the proof. 
\endpf
\end{proof}

     Finally, we prove an improvement for the unary case.

\begin{proposition}
Let $A, B$ be nonempty finite languages over a unary alphabet, say
$A = \lbrace a^{m_1}, \ldots, a^{m_r} \rbrace$ and
$B = \lbrace a^{n_1}, \ldots, a^{n_s} \rbrace$.  Then
$A^k \ \cap \ B^k \not= \emptyset$ for some $k \geq 1$
iff $\min_{1 \leq i \leq r} m_i \leq \max_{1 \leq j \leq s} n_j$
and $\min_{1 \leq j \leq s} n_j \leq \max_{1 \leq i \leq r} m_i$.
If both conditions hold, then $A^k \ \cap \ B^k \not= \emptyset$ for some
$k < \max(m_1, \ldots, m_r, n_1, \ldots, n_s)$, and this bound is tight.
\end{proposition}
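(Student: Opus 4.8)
The plan is to translate everything into statements about lengths. Write $m_{\min}=\min_i m_i$, $m_{\max}=\max_i m_i$, $n_{\min}=\min_j n_j$, $n_{\max}=\max_j n_j$, and $N=\max(m_{\max},n_{\max})$; by the symmetry between $A$ and $B$ I may assume $N=m_{\max}$. The first thing I would record is the elementary fact that, for every $k\ge 1$, the set of lengths occurring in $A^k$ is contained in the integer interval $[km_{\min},\,km_{\max}]$ (each factor contributes between $m_{\min}$ and $m_{\max}$), and, using only the factors $a^{m_{\min}}$ and $a^{m_{\max}}$, it contains every value $km_{\min}+j(m_{\max}-m_{\min})$ with $0\le j\le k$; the analogous statements hold for $B$.

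For the ``only if'' direction I would take any $a^{\ell}\in A^k\cap B^k$; then $km_{\min}\le\ell\le km_{\max}$ and $kn_{\min}\le\ell\le kn_{\max}$, so the rational number $\ell/k$ lies in both closed intervals $[m_{\min},m_{\max}]$ and $[n_{\min},n_{\max}]$. Two closed real intervals intersect exactly when each left endpoint is at most the other right endpoint, i.e.\ when $m_{\min}\le n_{\max}$ and $n_{\min}\le m_{\max}$, which is precisely the stated condition.

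For the ``if'' direction together with the bound, assume both inequalities hold and split on which of $m_{\min}$, $n_{\min}$ is larger. If $m_{\min}\ge n_{\min}$, I aim for the common length $km_{\min}$: the $A$-side reaches it for free with $k$ copies of $a^{m_{\min}}$, while on the $B$-side I set $k=n_{\max}-n_{\min}$ and use $j=m_{\min}-n_{\min}$ copies of $a^{n_{\max}}$ among the $k$ factors. A short check shows $0\le j\le k$ (using $n_{\min}\le m_{\min}\le n_{\max}$) and that these factors sum to $km_{\min}$, while $k=n_{\max}-n_{\min}\le n_{\max}-1<N$. The case $m_{\min}<n_{\min}$ is the mirror image, aiming for $kn_{\min}$ and taking $k=m_{\max}-m_{\min}\le m_{\max}-1=N-1$. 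The only point needing care is the degenerate situation $n_{\max}=n_{\min}$ (resp.\ $m_{\max}=m_{\min}$), where the hypotheses force $m_{\min}=n_{\min}$ and $k=1$ works, the claim being vacuous in the totally degenerate case $m_{\min}=m_{\max}=n_{\min}=n_{\max}=1$ unless one adopts the convention $A^{0}=B^{0}=\{\epsilon\}$.

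Finally, for tightness I would exhibit, for each $N\ge 2$, the languages $A=\{a,\,a^{N}\}$ and $B=\{a^{2}\}$, so the largest exponent present is $N$ and both conditions hold. Here $B^{k}=\{a^{2k}\}$ and the lengths of $A^{k}$ form $\{kN-j(N-1):0\le j\le k\}$, so $A^k\cap B^k\ne\emptyset$ iff $j(N-1)=k(N-2)$ has a solution with $0\le j\le k$; since $\gcd(N-1,N-2)=1$ this forces $(N-1)\mid k$, so the least admissible $k$ is exactly $N-1$, showing the exponent bound cannot be lowered. The step I expect to be the crux is obtaining $k<N$ rather than merely ``some $k$'': a naive argument produces $k=\lcm(m_{\max}-m_{\min},\,n_{\max}-n_{\min})$, which is far too large, and the key realization is that one of the two ``reach the target'' subproblems is free (just repeat the shortest factor), so the only cost is the spread of exponents on the other side, which is strictly less than $N$.
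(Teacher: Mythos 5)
Your proof is correct and takes essentially the same route as the paper's: the ``only if'' direction compares the length ranges of $A^k$ and $B^k$, the ``if'' direction sandwiches an extreme exponent of one language between the min and max of the other and takes $k$ equal to the difference of those outer two exponents (this is exactly the paper's choice of $a^l,a^n$ on one side and $a^m$ on the other with $l\le m\le n$ and $k=n-l$, specialized to $l,n$ being the min and max), and the tightness example is of the same form ($A=\lbrace a,a^N\rbrace$ with a singleton $B$). Your write-up is in fact slightly more careful than the paper's, since you make explicit the case split determining which language plays the ``outer'' role and you flag the degenerate situations where the construction would give $k=0$.
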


\begin{proof}
Suppose $\min_{1 \leq i \leq r} m_i > \max_{1 \leq j \leq s} n_j$.
Then every element of $A^k$ will be of length greater than every
element of $B^k$.  Similarly, if
$\min_{1 \leq j \leq s} n_j \leq \max_{1 \leq i \leq r} m_i$,
then every element of $B^k$ will be of length greater than every
element of $A^k$.  Hence if either condition holds, we have
$A^k \ \cap \ B^k = \emptyset$ for all $k \geq 1$.

Now suppose $\min_{1 \leq i \leq r} m_i \leq \max_{1 \leq j \leq s} n_j$
and $\min_{1 \leq j \leq s} n_j \leq \max_{1 \leq i \leq r} m_i$.
Then there exist $a^l, a^n \in A$ and $a^m \in B$ such that $l \leq m \leq n$.
Choose $i = n-m$ and $j = m-l$.  Then
$A^{i+j}$ contains
$(a^l)^i (a^n)^j = a^{li+nj} = a^{ln-lm+nm-nl} = a^{m(n-l)}$.
And $B^{i+j}$ contains $(a^m)^{i+j} = a^{m(n-l)}$.
So for $k = i+j$ we get $A^k \ \cap \ B^k \not= \emptyset$.
Now $i-j = n-l < n \leq \max(m_1, \ldots, m_r, n_1, \ldots, n_s)$.

The bound is tight, as can be seen by taking
$A = \lbrace a, a^n \rbrace$ and
$B = \lbrace a^{n-1} \rbrace$.  Then the least $k$ such that
$A^k \ \cap \ B^k \not= \emptyset$ is $k = n-1$.  
\endpf
\end{proof}

%
%

\end{document}